\newtheorem{thm}{Theorem}[section]
\newtheorem{coro}[thm]{Corollary}
\newtheorem{defn}[thm]{Definition}
\newtheorem{rem}[thm]{Remark}
\newtheorem{prop}[thm]{Proposition}
\newcommand{\p}[1]{\ensuremath{\mathbb{#1}}}
\newcommand{\gb}[1]{\ensuremath{\gamma_{#1 T}}}
\newcommand{\levy}{{L\'evy }}
\newcommand{\D}[1]{\ensuremath{\Delta_{#1}}}
\newcommand{\tD}[1]{\ensuremath{\Delta^{\pi}_{#1}}}
\def\1{\mathds{1}}
\def\R{\mathbb{R}}
\def\Q{\mathbb{Q}}
\def\E{\mathbb{E}}
\def\N{\mathbb{N}}
\def\d{\,\mathrm{d}}
\def\dd{\mathrm{d}}
\def\half{\frac{1}{2}}
\def\var{\mathrm{Var}}
\def\th{\theta}
\def\e{\mathrm{e}}
\def\b{\beta}
\def\G{\Gamma}
\def\a{\alpha}
\def\s{\sigma}
\def\l{\lambda}
\def\law{\overset{\textrm{law}}{=}}
\def\tp{\mathrm{T}}
\def\F{\mathcal{F}}
\def\lrb{\mathit{LRB}}
\def\lrbc{\mathit{LRB}_{\mathcal{C}}}
\def\lrbd{\mathit{LRB}_{\mathcal{D}}}
\def\i{\mathrm{i}}
\begin{document}

\bibliographystyle{plainnat_ed}

\title{\levy Random Bridges and the Modelling of Financial Information}

\author{Edward Hoyle\thanks{\texttt{ed.hoyle08@imperial.ac.uk}} \thanks{Department of Mathematics, Imperial College London, London SW7 2AZ, UK}
                \and Lane P.~Hughston\footnotemark[2]
                \and Andrea Macrina\thanks{Department of Mathematics, King's College London, London WC2R 2LS, UK}
                        \thanks{Institute of Economic Research, Kyoto University, Kyoto 606-8501, Japan}}
\date{April 2, 2010}
\maketitle


\begin{abstract}
The information-based asset-pricing framework of Brody, Hughston and Macrina (BHM) is extended to include a wider class of models for market information.
In the BHM framework, each asset is associated with a collection of random cash flows.
The price of the asset is the sum of the discounted conditional expectations of the cash flows.
The conditional expectations are taken with respect to a filtration generated by a set of `information processes'.
The information processes carry imperfect information about the cash flows.
To model the flow of information, we introduce in this paper a class of processes which we term \emph{L\'evy random bridges} (LRBs).
This class generalises the Brownian bridge and gamma bridge information processes considered by BHM.
An LRB is defined over a finite time horizon.
Conditioned on its terminal value, an LRB is identical in law to a L\'evy bridge.
We consider in detail the case where the asset generates a single cash flow $X_T$ occurring at a fixed date $T$.
The flow of market information about $X_T$ is modelled by an LRB terminating at the date $T$
with the property that the (random) terminal value of the LRB is equal to $X_T$.
An explicit expression for the price process of such an asset is found by working out the discounted conditional expectation of $X_T$ with respect to the natural filtration of the LRB.
The prices of European options on such an asset are calculated.
\end{abstract}

\section{Introduction and Preliminaries}
In financial markets, the information that traders and investors have about an asset is reflected in its price.
The arrival of new information then leads to changes in asset prices.
The `information-based framework' (or `$X$-factor theory') of Brody, Hughston and Macrina (BHM) isolates the emergence of information, and examines its role as a driver of price dynamics (see \citep{BHM1,BHM2,BHM3,YR2007,MPhD2006,HM2008}).
In the BHM framework, each asset is associated with a collection of random cash flows.
The price of the asset is the sum of the discounted conditional expectations of the cash flows.
The conditional expectations are taken with respect to (i) an appropriate measure, and (ii) the filtration generated by a set of so-called
information processes.
The information processes carry noisy or imperfect market information about the cash flows.
The present paper extends the work of \citep{BHM2} and \citep{BHM3} by introducing a wider class of information processes as a basis for the generation of the market filtration.
The set-up is as follows:

We fix a probability space $(\Omega,\Q,\F)$, and assume that all processes and filtrations under consideration are c\`adl\`ag.
Unless otherwise stated, when discussing a stochastic process we assume that the process takes values in $\R$, begins at time 0, and the filtration is that generated by the process itself.
We work with a finite time horizon $[0,T]$.

\subsection{\levy processes} \label{sec:levy_processes}
This section summarises a few well known results about one-dimensional \levy processes further details of which can be found in \citet{Bert1996} and \citet{Sato1999}.
A \levy process is a stochastically-continuous process that starts from the value 0, and has stationary, independent increments.
An increasing \levy process is called a \emph{subordinator}.
For $\{L_t\}$ a \levy process, its \emph{characteristic exponent} $\Psi:\R\rightarrow\mathbb{C}$ is defined by
\begin{equation}
    \E[\e^{\i\l L_t}]= \exp(-t\Psi(\l)), \qquad \l\in\R.
\end{equation}
The characteristic exponent of a \levy process characterises its law,
and its form is prescribed by the L\'evy-Khintchine formula:
\begin{equation}
    \label{eq:LK}
    \Psi(\lambda)=\i a \lambda+\half \s^2 \lambda^2+\int_{-\infty}^{\infty}(1-\e^{\i x \lambda}+\i x\lambda\1_{\{|x|<1\}})\Pi(\dd x),
\end{equation}
where $a\in\R$, $\s>0$, and $\Pi$ is a measure (the \emph{\levy measure}) on $\R\backslash\{0\}$ such that
\begin{equation}
    \int_{-\infty}^{\infty}(1 \wedge |x|^2)\,\Pi(\dd x)<\infty.
\end{equation}

There are particular subclasses of \levy processes that we shall consider, defined as follows:
\begin{defn}
    Let $\{L_t\}_{0\leq t \leq T}$ and $\{M_t\}_{0\leq t \leq T}$ be \levy processes.
    Then we write
    \begin{enumerate}
        \item
            $\{L_t\}\in\mathcal{C}[0,T]$ if the density of $L_t$ exists for every $t\in (0,T]$,
        \item
            $\{M_t\}\in\mathcal{D}$ if the marginal law of $M_t$ is discrete for some $t>0$.
    \end{enumerate}
\end{defn}
\noindent

\begin{rem}
    If the marginal law of $M_t$ is discrete for some $t>0$, then the marginal law of $M_t$ is discrete for all $t>0$.
    The density of $L_t$ exists if and only if its law is absolutely continuous with respect to the Lebesgue measure.
    In general, the absolute continuity of $L_t$ depends on $t$ \emph{\citep[chap.~5]{Sato1999}}; thus
    $\mathcal{C}[0,T_1]\subseteq \mathcal{C}[0,T_2]$ for $T_1\leq T_2$.
\end{rem}

We reserve the notation $f_t(x)$ to represent the density of $L_t$ for some $\{L_t\}\in\mathcal{C}[0,T]$.
Hence $f_t:\R\rightarrow \R_+$ and $\Q[L_t\in \dd x]=f_t(x)\d x$.
We reserve $Q_t(a)$ to represent the probability mass function of $M_t$ for some $\{M_t\}\in\mathcal{D}$.
We denote the state-space of $\{M_t\}$ by $\{a_i\}\subset\R$.
Hence $Q_t:\{a_i\}\rightarrow [0,1]$ and $\Q[M_t=a_i]=Q_t(a_i)$.
We assume that the sequence $\{a_i\}$ is strictly increasing.

The transition probabilities of \levy processes satisfy the convolution identities
\begin{align}
        f_t(x)&=\int_{-\infty}^{\infty} f_{t-s}(x-y)f_s(y)\d y &&\text{for $\{L_t\}\in\mathcal{C}[0,T]$},
        \\\intertext{and}
        Q_t(a_n)&=\sum_{m=-\infty}^{\infty}Q_{t-s}(a_n-a_m)Q_s(a_m) &&\text{for $\{M_t\}\in\mathcal{D}$},
\end{align}
for $0\leq s<t\leq T$.
These are the Chapman-Kolmogorov equations for the processes $\{L_t\}$ and $\{M_t\}$.

The law of any c\`adl\`ag stochastic process is characterised by its finite-dimensional distributions.
The finite-dimensional densities of $\{L_t\}_{0\leq t\leq T}$ exist and, with the understanding that $x_0=t_0=0$, they are given by
\begin{equation}
    \Q[L_{t_1} \in\dd x_1, \ldots,L_{t_n}\in\dd x_n]=
    \prod_{i=1}^n \left[f_{t_i-t_{i-1}}(x_i-x_{i-1}) \d x_i\right],
\end{equation}
for every $n\in\mathbb{N}_+$, every $0<t_1<\cdots<t_n\leq T$, and every $(x_1,\ldots,x_n)\in\R^n$.
With the understanding that $a_{k_0}=t_0=0$, the finite-dimensional probabilities of $\{M_t\}$ are
\begin{equation}
    \Q[M_{t_1} = a_{k_1}, \ldots,M_{t_n} = a_{k_n}]=\prod_{i=1}^n Q_{t_i-t_{i-1}}(a_{k_i}-a_{k_{i-1}}),
\end{equation}
for every $n\in\mathbb{N}_+$, every $0<t_1<\cdots<t_n$, and every $(k_1,\ldots,k_n)\in\mathbb{Z}^n$.

\subsection{\levy bridges}
A bridge is a stochastic process that is pinned to some fixed point at a fixed future time.
Bridges of Markov processes were constructed and analysed by \citet{FPY1993} in a general setting.
In this section we focus on the bridges of \levy processes in the classes $\mathcal{C}[0,T]$ and $\mathcal{D}$.
In particular we have the following:
\begin{prop}
    \label{prop:LB_Markov}
    The bridges of processes in $\mathcal{C}[0,T]$ and $\mathcal{D}$ are Markov processes.
\end{prop}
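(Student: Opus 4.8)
The plan is to argue directly at the level of finite-dimensional distributions. Recall first that a \levy process is itself a Markov process: by spatial homogeneity and the independence of increments, for $\{L_t\}\in\mathcal{C}[0,T]$ the transition kernel from $(s,x)$ to $(t,\cdot)$ has density $f_{t-s}(\cdot-x)$, and for $\{M_t\}\in\mathcal{D}$ it is the mass function $Q_{t-s}(\cdot-x)$ on the lattice $\{a_i\}$. The \levy bridge over $[0,T]$ with terminal value $z$ is the \levy process conditioned on its terminal value. In the class $\mathcal{D}$ one conditions on $\{M_T=a_k\}$, which has probability $Q_T(a_k)>0$ whenever $a_k$ is in the support of $M_T$; this is an ordinary regular conditional distribution and is unambiguously defined, with finite-dimensional probabilities obtained from the finite-dimensional probabilities of $\{M_t\}$ by inserting the factor $Q_{T-t_n}(a_k-a_{k_n})$ and dividing by $Q_T(a_k)$. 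In the class $\mathcal{C}[0,T]$ the event $\{L_T=z\}$ is null, so here I would \emph{define} the bridge through its finite-dimensional densities
\begin{equation*}
  \frac{1}{f_T(z)}\, f_{t_1}(x_1)\Bigl[\prod_{i=2}^{n} f_{t_i-t_{i-1}}(x_i-x_{i-1})\Bigr] f_{T-t_n}(z-x_n),
\end{equation*}
for $0<t_1<\cdots<t_n<T$ and $z$ with $f_T(z)>0$, and then check (using the convolution identity for $\{L_t\}$, valid for all relevant time increments because $\{L_t\}\in\mathcal{C}[0,T]$) that this family is consistent and so determines a process that is a version of the law of $\{L_t\}$ given $L_T=z$; this is exactly the kind of construction carried out in \citet{FPY1993}.

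With the bridge's finite-dimensional laws in hand, the Markov property follows from a one-line cancellation. Fixing $0<t_1<\cdots<t_n<T$ and forming the conditional density of the bridge at $t_n$ given its values $x_1,\dots,x_{n-1}$ at $t_1,\dots,t_{n-1}$, one divides the $n$-point density by the $(n-1)$-point density; the common factor $f_T(z)^{-1}f_{t_1}(x_1)\prod_{i=2}^{n-1}f_{t_i-t_{i-1}}(x_i-x_{i-1})$ cancels, and integrating the numerator over $x_n$ and applying the convolution identity with time increments $t_n-t_{n-1}$ and $T-t_n$ gives the normalising constant $f_{T-t_{n-1}}(z-x_{n-1})$. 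Hence
\begin{equation*}
  \Q\bigl[\text{bridge}_{t_n}\in\dd x_n \,\big|\, \text{bridge}_{t_1}=x_1,\dots,\text{bridge}_{t_{n-1}}=x_{n-1}\bigr]
  = \frac{f_{t_n-t_{n-1}}(x_n-x_{n-1})\,f_{T-t_n}(z-x_n)}{f_{T-t_{n-1}}(z-x_{n-1})}\,\dd x_n,
\end{equation*}
which depends on $(x_1,\dots,x_{n-1})$ only through $x_{n-1}$. Since the law of a c\`adl\`ag process is determined by its finite-dimensional distributions, this establishes the Markov property. The argument for the class $\mathcal{D}$ is word-for-word identical, with $f$ replaced by $Q$, integrals by sums over $\{a_i\}$, and the convolution identity replaced by its discrete counterpart.

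The only genuinely delicate point — and the step I expect to be the main obstacle — is the rigorous treatment of the $\mathcal{C}[0,T]$ bridge as a conditional process, because the conditioning event has probability zero. One must verify that the proposed family of finite-dimensional densities is consistent (this is precisely where membership in $\mathcal{C}[0,T]$, guaranteeing the existence and convolution property of every $f_t$ with $t\le T$, is used) and that the resulting process is indeed a regular conditional version of $\{L_t\}$ given $L_T=z$. I would also restrict throughout to terminal values $z$ in the support of $f_T$, respectively $a_k$ in the support of $Q_T$, so that all denominators occurring in the above ratios are strictly positive; for other terminal values the bridge is simply not defined. Beyond these bookkeeping matters the proof is routine.
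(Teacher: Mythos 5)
Your proof is correct, but it takes a different route from the paper's. The paper proves the Markov property by exploiting the independence of increments directly at the level of events: it rewrites the conditioning variables $L_{t_1},\ldots,L_{t_m},L_T$ as the increments $\Delta_i=L_{t_i}-L_{t_{i-1}}$ together with $L_T-L_{t_m}$, observes that $(L_t-L_{t_m},\,L_T-L_{t_m})$ is independent of $(\Delta_1,\ldots,\Delta_m)$, drops the redundant conditioning, and then reinstates $L_{t_m}$. That argument never needs the explicit finite-dimensional densities of the bridge, and it works verbatim for both $\mathcal{C}[0,T]$ and $\mathcal{D}$. You instead take the finite-dimensional densities of the bridge as the definition and deduce the Markov property from their product structure by dividing the $n$-point density by the $(n-1)$-point density; the cancellation and the convolution identity $\int f_{t_n-t_{n-1}}(x_n-x_{n-1})f_{T-t_n}(z-x_n)\,\dd x_n=f_{T-t_{n-1}}(z-x_{n-1})$ do all the work. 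Your route buys the explicit transition density of the bridge as a by-product (the paper derives this separately, immediately after the proposition, via Bayes), at the cost of having to first justify the bridge's definition through consistency of the proposed family of densities; the paper's route is more economical for the bare Markov statement. Both arguments share the same delicate point --- the conditioning event $\{L_T=z\}$ is null in the continuous case --- which you flag explicitly and which the paper handles by the standing assumption $0<f_T(z)<\infty$ together with the normalisation check in equation (\ref{eq:check_den}). Your restriction to terminal values in the support of $f_T$, respectively $Q_T$, matches the paper's conditions $0<f_T(z)<\infty$ and $Q_T(a_k)>0$.
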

\begin{proof}
    We need to the show that the process $\{L_t\}\in\mathcal{C}[0,T]$ is a Markov process when we know that $L_T=x$, for some constant $x$
    such that $0<f_T(x)<\infty$.
    (It will be explained later why the condition that $0<f_T(x)<\infty$ is required to ensure that the law of the bridge process is well defined.)
    In other words, we need to show that
    \begin{equation}
        \Q\left[L_{t} \leq y \,|\, L_{t_1}=x_1,\ldots,L_{t_m}=x_m,L_T=x\right]
                    = \Q\left[L_{t} \leq y \,|\, L_{t_m}=x_m,L_T=x\right],
    \end{equation}
    for all $m \in \mathbb{N}_+$, all $(x_1,\ldots,x_m,y)\in\R^{m+1}$, and all $0\leq t_1<\cdots<t_m<t\leq T$.
    The key property of $\{L_t\}$ that we use is its independent increments.
    Let us write
    \begin{align}
        \Delta_i&=L_{t_i}-L_{t_{i-1}},
        \\\delta_i&=x_i-x_{i-1},
    \end{align}
    for $1\leq i\leq m$, where $t_0=0$ and $x_0=0$.
    Then we have:
    \begin{align}
        &\Q\left[\left. L_{t} \leq y \,\right| L_{t_1}=x_1,\ldots,L_{t_m}=x_m,L_T=x\right] \nonumber
        \\&\quad=\Q\left[\left. L_{t}-L_{t_m} \leq y-x_m \,\right| \Delta_1=\delta_1,\ldots,\Delta_m=\delta_m,L_T-L_{t_m}=x-x_m\right] \nonumber
        \\&\quad=\Q\left[\left. L_{t}-L_{t_m} \leq y-x_m \,\right| L_T-L_{t_m}=x-x_m\right] \nonumber
        \\&\quad=\Q\left[\left. L_{t}-L_{t_m} \leq y-x_m \,\right| L_T-L_{t_m}=x-x_m,L_{t_m}=x_m\right] \nonumber
        \\&\quad=\Q\left[\left. L_{t} \leq y \,\right| L_T=x,L_{t_m}=x_m\right].
    \end{align}
    The proof for processes in class $\mathcal{D}$ is similar.
\end{proof}

Let $\{L_t\}\in\mathcal{C}[0,T]$, and let $\{L^{(z)}_{tT}\}_{0 \leq t\leq T}$ be an $\{L_t\}$-bridge to the value $z\in\R$ at time $T$.
For the transition probabilities of the bridge process to be well defined, we require that $0<f_T(z)<\infty$.
By the Bayes theorem we have
\begin{align}
    \Q\left[L^{(z)}_{tT}\in \dd y \left|\, L^{(z)}_{sT}=x \right.\right]&=
                \Q\left[L_{t}\in \dd y \left|\, L_{s}=x, L_T=z \right.\right] \nonumber
    \\ &=\frac{\Q\left[L_{t}\in \dd y, L_T\in \dd z \left|\, L_{s}=x \right.\right]}{\Q\left[L_T\in\dd z \left|\, L_{s}=x \right.\right]} \nonumber
    \\ &=\frac{f_{t-s}(y-x)f_{T-t}(z-y)}{f_{T-s}(z-x)} \d y,
\end{align}
for $0\leq s<t<T$.
We define the marginal bridge density $f_{tT}(y;z)$ by
\begin{equation}
    \label{eq:def_ftT}
    f_{tT}(y;z)=\frac{f_t(y)f_{T-t}(z-y)}{f_T(z)}.
\end{equation}
In this way
\begin{equation}
    \Q\left[L^{(z)}_{tT}\in \dd y \left|\, L^{(z)}_{sT}=x \right.\right]=f_{t-s,T-s}(y-x;z-x) \d y.
\end{equation}
The condition $0<f_T(z)<\infty$ is enough to ensure that
\begin{equation}
    y\mapsto f_{t-s,T-s}(y-L^{(z)}_{sT};z-L^{(z)}_{sT})
\end{equation}
is a well defined density for almost every value of $L^{(z)}_{sT}$.
To see this, note that
\begin{align}
    & \int_{-\infty}^{\infty}\int_{-\infty}^{\infty} f_{t-s,T-s}(y-x;z-x) \, \Q\left[L^{(z)}_{sT}\in\dd x\right]  \dd y \nonumber
    \\ &\qquad=\int_{-\infty}^{\infty}\int_{-\infty}^{\infty} f_{t-s,T-s}(y-x;z-x) f_{s,T}(x;z) \d x  \d y \nonumber
    \\ &\qquad=\int_{-\infty}^{\infty}\frac{f_{T-t}(z-y)}{f_T(z)}\int_{-\infty}^{\infty} f_{t-s}(y-x) f_{s}(x) \d x  \d y \nonumber
    \\ &\qquad=\frac{1}{f_T(z)} \int_{-\infty}^{\infty} f_{T-t}(z-y) f_t(y) \d y=1. \label{eq:ImADensity}
\end{align}
From (\ref{eq:ImADensity}) it follows that
\begin{equation}
    \label{eq:check_den}
    \Q\left[ \int_{-\infty}^{\infty} f_{t-s,T-s}(y-L^{(z)}_{sT};z-L^{(z)}_{sT}) \d y=1\right]=1.
\end{equation}

Let $\{M_t\}\in\mathcal{D}$, and let $\{M^{(k)}_{tT}\}_{0 \leq t\leq T}$ be an $\{M_t\}$-bridge to the value $a_k$ at time $T$,
so $\Q[M^{(k)}_{TT}=a_k]=1$.
For the transition probabilities of the bridge to be well defined, we require that $\Q[M_T=a_k]=Q_T(a_k)>0$.
Then the Bayes theorem gives
\begin{align}
    \Q\left[M^{(k)}_{tT}=a_j \left|\, M^{(k)}_{sT}=a_i \right.\right]&=
                \Q\left[M_{t}=a_j \left|\, M_{s}=a_i, M_T=a_k \right.\right] \nonumber
    \\ &=\frac{\Q\left[M_{t}=a_j, M_T=a_k \left|\, M_{s}=a_i \right.\right]}{\Q\left[M_T=a_k \left|\, M_{s}=a_i \right.\right]} \nonumber
    \\ &=\frac{Q_{t-s}(a_j-a_i)Q_{T-t}(a_k-a_j)}{Q_{T-s}(a_k-a_i)}, \label{eq:ratio}
\end{align}
for $0\leq s<t<T$.
Note that if $Q_T(a_k)=0$, then the ratio (\ref{eq:ratio}) is not well defined when $s=0$.

%
%

\section{\levy random bridges} \label{chap:LRB}
The idea of information-based asset pricing is to model the flow of information in financial markets and hence to construct the market filtration explicitly.
Let $X_T$ be a random variable (a market factor), with a given \emph{a priori} distribution.
The value of $X_T$ will be revealed to the market at time $T$.
We wish to construct an information process $\{\xi_{tT}\}$ such that $\xi_{TT}=X_T$.
We can then use the filtration generated by $\{\xi_{tT}\}$ to model the information that market participants have about $X_T$.
One problem to overcome is how to ensure that the marginal law of $\xi_{TT}$ is the \emph{a priori} law of $X_T$.

Two explicit forms for the information process have been considered in the literature.
The first is
\begin{equation}
    \label{eq:BB}
    \xi_{tT}=\frac{t}{T}X_T+\beta_{tT} \qquad (0\leq t \leq T),
\end{equation}
where $\{\beta_{tT}\}_{0\leq t\leq T}$ is a Brownian bridge starting and ending at the value 0 (see \citep{BHM1,BDFH2008,BHM2,HM2008,MPhD2006,YR2007}).
The second is
\begin{equation}
    \xi_{tT}=X_T \gb{t} \qquad (0\leq t \leq T),
\end{equation}
where $X_T>0$ and $\{\gb{t}\}_{0 \leq t\leq T}$ is a gamma bridge starting at the value 0 and ending at the value 1 (see \citep{BHM3}).
These forms share the property that each is identical in law to a \levy process conditioned to have the \emph{a priori} law of $X_T$ at time $T$.
The Brownian bridge information process is identical in law to a conditioned Brownian motion, and the gamma bridge information process is identical in law to a conditioned gamma process.

With this as motivation, in this section we define a class of processes that we call \levy random bridges (LRBs).
An LRB is identical in law to a \levy process conditioned to have a prespecified marginal law at $T$.
Later we shall use LRBs as information processes in information-based models.

\subsection{Defining LRBs}
An LRB can be described as a process whose bridge laws are \levy bridge laws.
In the definitions below we define LRBs by reference to their finite-dimensional distributions rather than as conditioned \levy processes.
This proves convenient in future calculations.

\begin{defn}
    \label{def:LRB}
    We say that the process $\{L_{tT}\}_{0\leq t\leq T}$ has law the $\lrb_{\mathcal{C}}([0,T],\{f_t\},\nu)$ if the following are satisfied:
    \begin{enumerate}       \item $L_{TT}$ has marginal law $\nu$.
        \item There exists a \levy process $\{L_t\}\in\mathcal{C}[0,T]$ such that $L_t$ has density $f_t(x)$ for all $t\in(0,T]$.
        \item \label{item:measurecond} $\nu$ concentrates mass where $f_T(z)$ is positive and finite, i.e.~$0<f_T(z)<\infty$ for $\nu$-a.e.~$z$.
        \item \label{item:condLevy} For every $n\in\mathbb{N}_+$, every $0<t_1<\cdots<t_n<T$, every $(x_1,\ldots,x_n)\in\R^n$, and $\nu$-a.e.~$z$, we have
                        \begin{equation*}
                                            \Q\left[L_{t_1,T}\leq x_1,\ldots,L_{t_n,T} \leq x_n \left|\, L_{TT} = z \right.\right]
                                             =\Q\left[L_{t_1}\leq x_1,\ldots,L_{t_n} \leq x_n \left|\, L_{T} = z \right.\right].
                        \end{equation*}
    \end{enumerate}
\end{defn}

\begin{defn}
    We say that the process $\{M_{tT}\}_{0\leq t\leq T}$ has law the $\lrb_{\mathcal{D}}([0,T],\{Q_t\},P)$ if the following are satisfied:
    \begin{enumerate}
        \item $M_{TT}$ has probability mass function $P$.
        \item There exists a \levy process $\{M_t\}\in\mathcal{D}$ such that $M_t$ has marginal probability mass function $Q_t(a)$ for all $t\in(0,T]$.
        \item The law of $M_{TT}$ is absolutely continuous with respect to the law of $M_T$, i.e.~%
        \begin{equation*}
            \text{if } P(a)>0 \text{ then } Q_T(a)>0.
        \end{equation*}
        \item For every $n\in\mathbb{N}_+$, every $0<t_1<\cdots<t_n<T$, every $(k_1,\ldots,k_n)\in\mathbb{Z}^n$, and every $b$ such that $P(b)>0$, we have
                        \begin{multline*}
                                \Q\left[M_{t_1,T}= a_{k_1},\ldots,M_{t_n,T} = a_{k_n} \left|\, M_{TT} = b \right.\right]
                                            =\\\Q\left[M_{t_1}= a_{k_1},\ldots,M_{t_n} = a_{k_n} \left|\, M_{T} = b \right.\right].
                        \end{multline*}
    \end{enumerate}
\end{defn}

\begin{defn}
For a fixed time $s<T$, if the law of the process $\{\eta_{s+t}\}_{0\leq t \leq T-s}$ is of the type \mbox{$\lrb_{\mathcal{C}}([0,T-s],\,\cdot\,,\,\cdot\,)$}, resp.~\mbox{$\lrb_{\mathcal{D}}([0,T-s],\,\cdot\,,\,\cdot\,)$},
then we say that $\{\eta_{t}\}_{s\leq t \leq T}$ has law $\lrb_{\mathcal{C}}([s,T],\,\cdot\,,\,\cdot\,)$, resp.~$\lrb_{\mathcal{D}}([s,T],\,\cdot\,,\,\cdot\,)$.
\end{defn}
If the law of a process is one of the $\lrb$-types defined above, then we say that it is a \levy random bridge (LRB).

\subsection{Finite-dimensional distributions} \label{sec:LRB_FDD}
For the rest of this section we assume that $\{L_{tT}\}$ and $\{M_{tT}\}$ are LRBs with laws $\lrbc([0,T],\{f_t\},\nu)$
and $\lrbd([0,T],\{Q_t\},P)$, respectively.
We also assume that $\{L_t\}$ is a \levy process such that $L_t$ has density $f_t(x)$ for $t\leq T$, and
$\{M_t\}$ is a \levy process such that $M_t$ has probability mass function $Q_t(a_i)$ for $t\leq T$.

The finite dimensional distributions of $\{L_{tT}\}$ are given by
\begin{equation}
    \label{eq:LRBlaw}
     \Q\left[ L_{t_1,T}\in \dd x_1,\ldots, L_{t_n,T}\in\dd x_n, L_{TT}\in\dd z \right]=
    \prod_{i=1}^{n} \left[f_{t_i-t_{i-1}}(x_i-x_{i-1}) \d x_i\right] \psi_{t_{n}}(\dd z;x_{n}),
\end{equation}
where the (un-normalised) measure $\psi_t(\dd z;\xi)$ is given by
\begin{align}
    \psi_0(\dd z;\xi)&=\nu(\dd z),
    \\\psi_t(\dd z;\xi)&=\frac{f_{T-t}(z-\xi)}{f_T(z)}\nu(\dd z), \label{eq:psit}
\end{align}
for $0<t<T$.
It follows from the definition of $\lrbc([0,T],\{f_t\},\nu)$ and (\ref{eq:check_den}) that
\begin{equation}
    f_{tT}(x;z)=\frac{f_t(x)f_{T-t}(z-x)}{f_T(z)}
\end{equation}
is a well-defined density (as a function of $x$) for $t<T$ and $\nu$-a.e.~$z$.
Then from (\ref{eq:LRBlaw}) the marginal law of $L_{tT}$ is given by
\begin{align}
    \Q[L_{tT}\in \dd x]&=f_t(x) \psi_t(\R;x) \d x \nonumber
    \\&=\int_{z=-\infty}^{\infty} f_{tT}(x;z) \,\nu(\dd z) \d x.
\end{align}
Hence the density of $L_{tT}$ exists for $t<T$, and
\begin{equation}
    0\leq \psi_t(\R;x)<\infty \qquad \text{for Lebesgue-a.e.~$x\in\mathrm{Support}(f_t)$.}
\end{equation}
In particular, we have
\begin{align}
    \label{eq:density_conditions}
    &0< \psi_t(\R;L_{tT})<\infty &&\text{and} &&0< f_{T-t}(x-L_{tT})<\infty
\end{align}
for a.e.~value of $L_{tT}$.
If $\nu(\{z\})=1$ for some point $z\in\R$, i.e.~$\Q[L_{TT}=z]=1$, then $\{L_{tT}\}$ is a \levy bridge.
If $\nu(\dd z)=f_T(z)\d z$, then $\{L_{tT}\}\law\{L_t\}$ for $t\in[0,T]$.

In the discrete case, the finite-dimensional probabilities of $\{M_{tT}\}$ are
\begin{equation}
     \Q\left[ M_{t_1,T}= a_{k_1},\ldots, M_{t_n,T}= a_{k_n}, M_{TT}= z \right]=
    \\ \prod_{i=1}^{n} \left[Q_{t_i-t_{i-1}}(a_{k_i}-a_{k_{1-1}}) \right] \phi_{t_{n}}(z;a_{k_n}),
\end{equation}
where the function $\phi_t(z;\xi)$ is given by
\begin{align}
    \phi_0(z;\xi)&=P(z),
    \\\phi_t(z;\xi)&=\frac{Q_{T-t}(z-\xi)}{Q_T(z)}P(z),
\end{align}
for $0<t<T$.
If $P$ is identical to $Q_T$, then $\{M_{tT}\}\law\{M_t\}$ for $t\in[0,T]$.

The existing literature on information-based asset pricing exploits special properties Brownian and gamma bridges.
See \citet{EY2004} for insights into how remarkable these bridges are.
The methods we use do not require special properties of particular \levy bridges.
However, we use the Brownian and gamma cases as examples, and the results we obtain agree with previous work.

Many of the results that follow are proved for the LRB $\{L_{tT}\}$, which has a continuous state-space.
Analogous results are provided for the discrete state-space process $\{M_{tT}\}$; details of proofs are omitted since they are similar to the continuous case.

\subsection{LRBs as conditioned \levy processes}
It is useful to interpret an LRB as a \levy process conditioned to have a specified marginal law $\nu$ at time $T$.
Suppose that the random variable $Z$ has law $\nu$, then:
\begin{align}
 &\Q\left[L_{t_1} \in\dd x_1, \ldots,L_{t_{n}}\in\dd x_{n}, L_{T}\in\dd z \left|\, L_T=Z \right.\right] \nonumber
 \\ &\qquad=     \Q\left[L_{t_1} \in\dd x_1, \ldots,L_{t_{n}}\in\dd x_{n} \left|\, L_T=z\right.\right]  \nu( \dd z) \nonumber
 \\ &\qquad=    \frac{f_{T-t_{n-1}}(z-x_{n-1})}{f_T(z)} \, \prod_{i=1}^{n} \left[f_{t_i-t_{i-1}}(x_i-x_{i-1}) \d x_i\right]\nu( \dd z).
\end{align}
Hence the conditioned \levy process has law $\lrbc([0,T],\{f_t\},\nu)$.

\subsection{The Markov property}
In this section we show that LRBs are Markov processes.
The Markov property is a key tool in the application of LRBs to information-based asset pricing.
As will be seen below, the Markov property of an LRB follows from the Markov property from the associated \levy bridge processes.

\subsubsection{Continuous state-space}
\begin{prop}
    \label{prop:markov}
    The process $\{L_{tT}\}_{0\leq t \leq T}$ is a Markov process with transition law
    \begin{equation}
        \label{eq:LRBtranslaw}
            \begin{aligned}
                \Q[L_{tT} \in \dd y \,|\, L_{sT}=x]&=\frac{\psi_t(\R;y)}{\psi_s(\R;x)} f_{t-s}(y-x)\d y,
                \\\Q[L_{TT} \in \dd y \,|\, L_{sT}=x]&=\frac{\psi_s(\dd y;x)}{\psi_s(\R;x)},
            \end{aligned}
    \end{equation}
    for $0\leq s<t<T$.
\end{prop}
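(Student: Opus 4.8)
The plan is to read everything off the explicit finite-dimensional densities recorded in~(\ref{eq:LRBlaw}). First I would integrate out the terminal coordinate: for $0<t_1<\cdots<t_n<T$, integrating $z$ over $\R$ in~(\ref{eq:LRBlaw}) and recalling that $\psi_{t_n}(\R;x_n)$ is by definition the total mass of the measure $\psi_{t_n}(\dd z;x_n)$ yields
\[
    \Q\left[L_{t_1,T}\in\dd x_1,\ldots,L_{t_n,T}\in\dd x_n\right]
    =\prod_{i=1}^{n}\left[f_{t_i-t_{i-1}}(x_i-x_{i-1})\,\dd x_i\right]\psi_{t_n}(\R;x_n).
\]
So the joint law of $(L_{t_1,T},\ldots,L_{t_n,T})$ is a product of \levy transition densities weighted by a single scalar depending only on the \emph{last} coordinate $x_n$ (and the last time $t_n$) --- the structural feature already present in~(\ref{eq:LRBlaw}).

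Next, fix $0\le t_1<\cdots<t_m<t<T$ and write the conditional density of $L_{tT}$ given $L_{t_1,T}=x_1,\ldots,L_{t_m,T}=x_m$ as the quotient of the $(m{+}1)$-point density by the $m$-point density. Every factor $f_{t_i-t_{i-1}}(x_i-x_{i-1})$ with $i\le m$ cancels; the weight $\psi_{t_m}(\R;x_m)$ coming from the $m$-point density survives in the denominator, a new weight $\psi_t(\R;y)$ appears in the numerator, and the factor $f_{t-t_m}(y-x_m)$ remains, leaving
\[
    \Q\left[L_{tT}\in\dd y\left|\,L_{t_1,T}=x_1,\ldots,L_{t_m,T}=x_m\right.\right]
    =\frac{\psi_t(\R;y)}{\psi_{t_m}(\R;x_m)}\,f_{t-t_m}(y-x_m)\,\dd y,
\]
which depends on the conditioning variables only through $x_m$. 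Repeating the computation while retaining the terminal factor $\psi_{t_m}(\dd z;x_m)$ from~(\ref{eq:LRBlaw}) gives in the same way
\[
    \Q\left[L_{TT}\in\dd z\left|\,L_{t_1,T}=x_1,\ldots,L_{t_m,T}=x_m\right.\right]
    =\frac{\psi_{t_m}(\dd z;x_m)}{\psi_{t_m}(\R;x_m)},
\]
again a function of $x_m$ alone. Setting $s=t_m$, these are exactly the transition laws~(\ref{eq:LRBtranslaw}), and the fact that they are blind to $x_1,\ldots,x_{m-1}$ is the Markov property for finite collections of times.

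It remains to phrase this as the Markov property with respect to the natural filtration. Since $\{L_{tT}\}$ is c\`adl\`ag, right-continuity gives $\F_s^{L}=\sigma\!\left(\{L_{uT}:u\in\Q\cap[0,s]\}\cup\{L_{sT}\}\right)$, so a standard monotone-class argument reduces $\Q[L_{tT}\in A\mid\F_s^{L}]=\Q[L_{tT}\in A\mid L_{sT}]$ to the finite-time identities just proved (with $s$ among the conditioning times). The one point that needs genuine care --- and which I expect to be the main technical obstacle --- is the well-definedness of the quotients: one must know that $0<\psi_{t_m}(\R;x_m)<\infty$ for a.e.\ realised value of $L_{t_m,T}$ and that dividing by $f_{T-s}$, $f_{t-s}$, etc., introduces no zeros or infinities. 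This is supplied by~(\ref{eq:density_conditions}) and by the normalisation $\int_{\R} f_t(x)\,\psi_t(\R;x)\,\dd x=1$ established when computing the marginal law of $L_{tT}$; with those in hand the cancellations above hold for $\nu$-a.e.\ $z$ and Lebesgue-a.e.\ values of the other variables, and the argument is routine.
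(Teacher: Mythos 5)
Your argument is correct, but it takes a different route from the paper's for the main case $t<T$. You obtain the Markov property by marginalizing $z$ out of (\ref{eq:LRBlaw}) and then writing the conditional law of $L_{tT}$ given $L_{t_1,T},\ldots,L_{t_m,T}$ as a quotient of finite-dimensional densities, observing that everything cancels except a factor depending on $(x_m,y)$ alone; the explicit transition law (\ref{eq:LRBtranslaw}) then falls out of the same computation. The paper instead proves the Markov property for $t<T$ by conditioning on the terminal value $L_{TT}$, invoking Definition \ref{def:LRB} part \ref{item:condLevy} to pass to the \levy bridge, applying Proposition \ref{prop:LB_Markov} (bridges of processes in $\mathcal{C}[0,T]$ are Markov), and integrating back over the terminal law; only the $t=T$ case and the derivation of the explicit transition densities are done, as in your proposal, by Bayes applied to (\ref{eq:LRBlaw}). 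Your route is more self-contained --- it does not use Proposition \ref{prop:LB_Markov} at all and makes the transition law, which the paper leaves as ``follows from (\ref{eq:LRBlaw})'', completely explicit --- whereas the paper's route emphasizes the structural point that the LRB inherits its Markov property from the underlying \levy bridge. Your attention to the well-definedness of the quotients via (\ref{eq:density_conditions}) and the normalisation $\int f_t(x)\psi_t(\R;x)\,\dd x=1$ addresses the only genuinely delicate point, and the monotone-class passage from finitely many conditioning times to the natural filtration is a refinement the paper's proof silently omits.
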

\begin{proof}
    To show that $\{L_{tT}\}$ is Markov, it is sufficient to show that
    \begin{equation}
        \Q\left[L_{tT} \leq y \,|\, L_{t_1,T}=x_1,\ldots,L_{t_m,T}=x_m\right]
                    = \Q\left[L_{tT} \leq y \,|\, L_{t_m,T}=x_m\right],
    \end{equation}
    for all $m \in \mathbb{N}_+$, all $(x_1,\ldots,x_m,y)\in\R^{m+1}$, and all $0\leq t_1<\cdots<t_m<t\leq T$.
    When $t=T$ we apply the Bayes theorem to (\ref{eq:LRBlaw}) and obtain
    \begin{equation}
         \Q\left[\left. L_{TT}\in\dd y  \,\right|  L_{t_1,T}=x_1,\ldots,L_{t_m,T}=x_m\right]=\frac{\psi_{t_m}(\dd y;x_m)}{\psi_{t_m}(\R;x_m)}.
    \end{equation}
    We need now only consider the case $t<T$.
    Proposition \ref{prop:LB_Markov} shows that \levy bridges are Markov processes; therefore,
    \begin{equation}
        \label{eq:LBMarkov}
         \Q\left[L_{t}\leq y \,|\, L_{t_1}=x_1,\ldots,L_{t_m}=x_m,L_T=x\right]=
                     \Q\left[L_{t}\leq y \,|\, L_{t_m}=x_m,L_T=x\right].
    \end{equation}
    It is straightforward by Definition \ref{def:LRB} part \ref{item:condLevy} to show that LRBs are Markov processes.
    Indeed we have:
    \begin{align}
        &\Q\left[\left. L_{tT} \leq y \,\right| L_{t_1,T}=x_1,\ldots,L_{t_m,T}=x_m\right] \nonumber
        \\  &\qquad=\int_{-\infty}^{\infty} \Q\left[\left. L_{tT} \leq y \,\right| L_{t_1,T}=x_1,\ldots,L_{t_m,T}=x_m,L_{T,T}=x\right]\nu(\dd x) \nonumber
        \\  &\qquad=\int_{-\infty}^{\infty} \Q\left[\left. L_{t} \leq y \,\right| L_{t_1}=x_1,\ldots,L_{t_m}=x_m,L_{T}=x\right]\nu(\dd x) \nonumber
        \\  &\qquad=\int_{-\infty}^{\infty} \Q\left[\left. L_{t} \leq y \,\right| L_{t_m}=x_m,L_{T}=x\right]\nu(\dd x) \nonumber
        \\  &\qquad=\int_{-\infty}^{\infty} \Q\left[\left. L_{tT} \leq y \,\right| L_{t_m,T}=x_m,L_{T,T}=x\right]\nu(\dd x) \nonumber
        \\  &\qquad= \Q\left[\left. L_{tT} \leq y \,\right| L_{t_m,T}=x_m\right].
    \end{align}
    The form of the transition law of $\{L_{tT}\}$ appearing in (\ref{eq:LRBtranslaw}) follows from (\ref{eq:LRBlaw}).
\end{proof}

\bigskip

\noindent{\bf Example.}
In the Brownian case we set
\begin{equation}
    \label{eq:NormalDensity}
    f_t(z)=\frac{1}{\sqrt{2 \pi t}} \exp\left[-\frac{z^2}{2t} \right]
\end{equation}
for $t>0$.
Thus $f_t(x)$ is the marginal density of a standard Brownian motion at time $t$.
Then we have
\begin{equation}
    \Q[L_{tT} \in \dd y \,|\, L_{sT}=x]=
    \sqrt{\frac{T-s}{T-t}}\frac{\int_{-\infty}^{\infty}\e^{-\half\left[\frac{(z-y)^2}{T-t}-\frac{z^2}{T}\right]}\,\nu(\dd z)}
                {\int_{-\infty}^{\infty}\e^{-\half\left[\frac{(z-x)^2}{T-s}-\frac{z^2}{T}\right]}\,\nu(\dd z)} \frac{\e^{-\half \frac{(y-x)^2}{t-s}}}{\sqrt{2\pi (t-s)}}\d y,
\end{equation}
and
\begin{align}
    \Q[L_{TT} \in \dd y \,|\, L_{sT}=x] &=\frac{\e^{-\half\left[\frac{(y-x)^2}{T-s}-\frac{y^2}{T}\right]}\,\nu(\dd y)}
                {\int_{-\infty}^{\infty}\e^{-\half\left[\frac{(z-x)^2}{T-s}-\frac{z^2}{T}\right]}\,\nu(\dd z)} \nonumber
    \\&=\frac{\e^{\frac{1}{T-s}\left[xy-\half \frac{s}{T}y^2\right]}\,\nu(\dd y)}
                {\int_{-\infty}^{\infty}\e^{\frac{1}{T-s}\left[xz-\half \frac{s}{T}z^2\right]}\,\nu(\dd z)}.
\end{align}

\bigskip

\noindent{\bf Example.}
In the gamma case we consider a one-parameter family of processes indexed by $m>0$. We set
\begin{equation}
    \label{eq:gamma_den_2}
    f_t(z)=\1_{\{z>0\}} \frac{z^{mt-1}}{\G[mt]} \e^{-z},
\end{equation}
where $\G[z]$ is the gamma function, defined as usual for $x>0$ by
\begin{equation}
    \G[x]=\int_0^\infty u^{x-1} \e^{-u} \d u.
\end{equation}
These densities are the increment densities of the gamma process with mean $m$ and variance $m$ at time $t=1$ (see \citet{BHM3}).
Then
\begin{multline}
    \Q[L_{tT} \in \dd y \,|\, L_{sT}=x] \\
        =\frac{\1_{\{y>x\}}}{\mathrm{B}[m(T-t),m(t-s)]}\frac{\int_{y}^{\infty}(z-y)^{m(T-t)-1}z^{1-mT}\,\nu(\dd z)}{\int_{x}^{\infty}(z-x)^{m(T-s)-1}z^{1-mT}\,\nu(\dd z)}
            (y-x)^{m(t-s)-1} \d y,
\end{multline}
and
\begin{equation}
    \Q[L_{TT} \in \dd y \,|\, L_{sT}=x]
        =\frac{\1_{\{y>x\}}(y-x)^{m(T-s)-1}y^{1-mT}\,\nu(\dd y)}{\int_{x}^{\infty}(z-x)^{m(T-s)-1}z^{1-mT}\,\nu(\dd z)}.
\end{equation}
Here $\mathrm{B}[\a,\b]$ is the beta function, defined as usual for $\a>0$ and $\b>0$ by
\begin{equation}
    \mathrm{B}[\a,\b]=\int_0^1 x^{\a-1} (1-x)^{\b-1} \d x=\frac{\G[\a]\G[\b]}{\G[\a+\b]}.
\end{equation}

\subsubsection{Discrete state-space}
The analogous result to Proposition \ref{prop:markov} for the discrete case is provided below---the proof is similar.
\begin{prop}
    The process $\{M_{tT}\}_{0\leq t\leq T}$ has the Markov property, with transition probabilities given by
    \begin{equation}
        \label{eq:discreteProb}
        \begin{aligned}
        \Q\left[M_{tT} = a_j \left|\, M_{sT}=a_i\right.\right]
            &=\frac{\sum_{k=-\infty}^{\infty} \phi_t(a_k;a_j)}{\sum_{k=-\infty}^{\infty} \phi_s(a_k;a_i)} Q_{t-s}(a_j-a_i),
        \\\Q\left[M_{TT} =a_j \left|\, M_{sT}=a_i\right.\right]&=\frac{\phi_s(a_j;a_i)}{\sum_{k=-\infty}^{\infty} \phi_s(a_k;a_i)},
        \end{aligned}
    \end{equation}
    for $0\leq s<t<T$.
\end{prop}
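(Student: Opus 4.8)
The plan is to follow the proof of Proposition~\ref{prop:markov} line by line, replacing densities by probability mass functions and integrals against $\nu$ by sums against $P$. It suffices to show
\[
    \Q\left[M_{tT} = a_j \mid M_{t_1,T}=a_{k_1},\ldots,M_{t_m,T}=a_{k_m}\right]
        = \Q\left[M_{tT} = a_j \mid M_{t_m,T}=a_{k_m}\right]
\]
for all $m\in\N_+$, all $0\le t_1<\cdots<t_m<t\le T$, and all states for which the conditioning event has positive probability, and then to read off the transition probabilities. As in the continuous case I would split into $t=T$ and $t<T$.

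For $t=T$ I would apply Bayes' theorem to the discrete finite-dimensional law of $\{M_{tT}\}$. The products of increment probabilities cancel between numerator and denominator, leaving $\phi_{t_m}(a_j;a_{k_m})$ over $\sum_{k}\phi_{t_m}(a_k;a_{k_m})$; this depends only on $a_{k_m}$ (and $a_j$), which gives the Markov property at the endpoint and is precisely the second line of~(\ref{eq:discreteProb}) with $s=t_m$. The denominator is strictly positive because the conditioning event $\{M_{t_m,T}=a_{k_m}\}$ has positive probability, that probability being $Q_{t_m}(a_{k_m})\sum_k\phi_{t_m}(a_k;a_{k_m})$.

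For $t<T$ I would invoke Proposition~\ref{prop:LB_Markov} --- bridges of class-$\mathcal{D}$ \levy processes are Markov --- together with the fourth defining property of $\lrbd$. Writing the conditional probability as a mixture over the terminal value $M_{TT}$, the $\lrbd$ property identifies, for each fixed value $b$ with $P(b)>0$, the conditional law of $M_{tT}$ given the observed values and $\{M_{TT}=b\}$ with the conditional law of $\{M_t\}$ given the corresponding values and $\{M_T=b\}$; the Markov property of the \levy bridge then reduces the conditioning to $\{M_{t_m}=a_{k_m},M_T=b\}$; reversing the $\lrbd$ identification and reassembling the mixture --- using the $t=T$ case to see that the conditional law of $M_{TT}$ given the observation depends only on $a_{k_m}$ --- leaves dependence on $a_{k_m}$ alone. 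This is the discrete transcription of the displayed chain of equalities in the proof of Proposition~\ref{prop:markov}.

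The explicit transition probability for $t<T$ then follows by Bayes' theorem applied to the two-point marginal of the finite-dimensional law: from
\[
    \Q\left[M_{sT}=a_i,\, M_{tT}=a_j\right]=Q_s(a_i)\,Q_{t-s}(a_j-a_i)\sum_k\phi_t(a_k;a_j)
\]
and $\Q[M_{sT}=a_i]=Q_s(a_i)\sum_k\phi_s(a_k;a_i)$, the ratio is the first line of~(\ref{eq:discreteProb}). There is no serious obstacle here: the only point needing attention --- as in the continuous case --- is keeping track of which conditioning events have positive probability so that every ratio is well defined, and in the discrete setting this is straightforward, the absolute-continuity hypothesis having collapsed to the implication ``$P(b)>0\Rightarrow Q_T(b)>0$'' built into the definition of $\lrbd$.
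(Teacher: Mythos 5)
Your proposal is correct and is exactly what the paper intends: the paper omits this proof, stating only that it is ``similar'' to that of Proposition~\ref{prop:markov}, and your line-by-line transcription (Bayes' theorem on the discrete finite-dimensional law for $t=T$, the Markov property of class-$\mathcal{D}$ \levy bridges together with the defining mixture property of $\lrbd$ for $t<T$, and the two-point marginal for the explicit transition probabilities) is precisely that translation. No further comment is needed.
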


\subsection{Conditional terminal distributions}
Let $\{\F^L_t\}$ and $\{\F^{M}_t\}$ be the filtrations generated by $\{L_{tT}\}$ and $\{M_{tT}\}$, respectively.
\begin{defn}
Let $\nu_s$ to be the $\F^L_s$-conditional law of the terminal value $L_{TT}$,
and let $P_s$ to be the $\F^M_s$-conditional probability mass function of the terminal value $M_{TT}$.
\end{defn}
We have $\nu_0(A)=\nu(A)$, and $P_0(a)=P(a)$.
Furthermore, when $s>0$, it follows from the results of the previous section that
\begin{align}
    \label{eq:C_terminal}
    \nu_s(\dd z)&= \frac{\psi_s(\dd z; L_{sT})}{\psi_s(\R;L_{sT})},
    \\ \intertext{and}
    \label{eq:D_terminal}
    P_s(a_k)&=\frac{\phi_s(a_k;M_{sT})}{\sum_{j=-\infty}^{\infty} \phi_s(a_j;M_{sT})}.
\end{align}
When the \emph{a priori} $q$th moment of $L_{TT}$ is finite, the $\F^L_s$-conditional $q$th moment is finite and given by
\begin{equation}
    \label{eq:LpMart}
    \int_{-\infty}^{\infty} |z|^q \, \nu_s(\dd z).
\end{equation}
Similarly, when the \emph{a priori} $q$th moment of $M_{TT}$ is finite, the $\F^M_s$-conditional $q$th moment is finite and given by
\begin{equation}
    \label{eq:MpMart}
    \sum_{k=-\infty}^{\infty} |a_k|^q \, P_s(a_k).
\end{equation}
When they are finite, the quantities in (\ref{eq:LpMart}) and (\ref{eq:MpMart}) are martingales
with respect to $\{\F^L_t\}$ and $\{\F^M_t\}$, respectively.
If $q\in\p{Z}$ then $\int |z|^q \, \nu(\dd z)<\infty$ ensures that $\int z^q \, \nu(\dd z)$ is a martingale,
and $\sum |a_k|^q P(a_k)<\infty$ ensures that $\sum a_k^q \,P(a_k)$ is a martingale.

When the terminal law $\nu$ admits a density, we denote it by $p(z)$, i.e.~$\nu(\dd z)=p(z)\d z$.
In this case the $L_{tT}$-conditional density of $L_{TT}$ exists, and we denote it by
\begin{equation}
    p_t(z)=\frac{\nu_t(\dd z)}{\dd z}=\frac{f_{T-t}(z-L_{tT})p(z)}{\psi_t(\R;L_{tT})f_T(z)}.
\end{equation}

\subsection{Measure changes} \label{sec:measure}
In this section we assume that there exists a measure $\p{L}$ under which $\{L_{tT}\}$ is a \levy process, and that the
density of $L_{tT}$ is $f_t(x)$.
Writing $\psi_t=\psi_t(\R;L_{tT})$, we can show that $\{\psi_t\}_{0\leq t<T}$ is an $\p{L}$-martingale (with respect to the filtration generated by $\{L_{tT}\}$).
In particular, for times $0\leq s<t$ we have
\begin{align}
    \E_{\p{L}}\left[\psi_t \left|\, \F^{L}_s \right.\right]
    &=\E_{\p{L}}\left[\left.\int_{-\infty}^{\infty}\frac{f_{T-t}(z-L_{tT})}{f_T(z)}\, \nu(\dd z)\,\right|\F^{L}_s  \right] \nonumber
    \\&=\E_{\p{L}}\left[\left.\int_{-\infty}^{\infty}\frac{f_{T-t}(z-L_{sT}-(L_{tT}-L_{sT}))}{f_T(z)}\, \nu(\dd z)\,\right|L_{sT} \right] \nonumber
    \\&=\int_{y=-\infty}^{\infty}\int_{z=-\infty}^{\infty}\frac{f_{T-t}(z-L_{sT}-y)}{f_T(z)}\, \nu(\dd z)\,f_{t-s}(y) \d y \nonumber
    \\&=\int_{z=-\infty}^{\infty}\frac{1}{f_T(z)}\int_{y=-\infty}^{\infty}f_{T-t}(z-L_{sT}-y)f_{t-s}(y)\d y \, \nu(\dd z) \nonumber
    \\&=\int_{z=-\infty}^{\infty}\frac{f_{T-s}(z-L_{sT})}{f_T(z)} \, \nu(\dd z) \nonumber
    \\ &=\psi_s.
\end{align}
Since $\psi_0=1$, we can define a probability measure $\p{L}^{\mathrm{rb}}$ by the Radon-Nikod\'ym derivative
\begin{equation}
    \left.\frac{\dd \p{L}^{\mathrm{rb}}}{\dd \p{L}}\right|_{\F^{L}_t}=\psi_t \qquad \text{for $0\leq t <T$}.
\end{equation}
It was noted in Section \ref{sec:LRB_FDD} that $0<\psi_t<\infty$, so $\p{L}^{\mathrm{rb}}$ is equivalent to $\p{L}$ for $t<T$.
For $0\leq s<t<T$, the transition law of $\{L_{tT}\}$ under $\p{L}^{\mathrm{rb}}$ is
\begin{align}
    \p{L}^{\mathrm{rb}}\left[L_{tT}\in\dd y \left|\, \F_{s}^{L}\right.\right]
         &=\E_{\p{L}^{\mathrm{rb}}}\left[\1_{\{L_{tT}\in\dd y\}}  \left|\, \F_{s}^L\right.\right] \nonumber
    \\ &=\psi_s^{-1} \,\E_{\p{L}} \left[\psi_t \1_{\{L_{tT}\in\dd y\}}  \left|\, L_{sT}\right.\right] \nonumber
    \\ &=\psi_s^{-1} \int_{-\infty}^{\infty} \frac{f_{T-t}(z-y)}{f_T(z)} \, \nu(\dd z) \, f_{t-s}(y-L_{sT})\d y \nonumber
    \\ &= \frac{\psi_t(\R;y)}{\psi_s(\R;L_{sT})} f_{t-s}(y-L_{sT})\d y.
\end{align}
We see that $\{L_{tT}\}_{0\leq t<T}$ is a Markov process under the measure $\p{L}^{\mathrm{rb}}$.
Furthermore, by virtue of Proposition \ref{prop:markov}, $\{L_{tT}\}$ is an LRB with law $\lrbc([0,T],\{f_t\},\nu)$.

We can restate this result with reference to the measure $\Q$ as the following:
\begin{prop}
    Let $\p{L}$ be defined by
    \begin{equation}
        \left. \frac{\dd \p{L}}{\dd \Q} \right|_{\F^L_t}= \psi_t(\R;L_{tT})^{-1}
    \end{equation}
    for $t\in[0,T)$.
    Then $\p{L}$ is a probability measure.
    Under $\p{L}$, $\{L_{tT}\}_{0\leq t <T}$ is a \levy process, and $L_{tT}$ has density $f_t(x)$.
\end{prop}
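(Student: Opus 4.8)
The statement is the mirror image of the construction in Section~\ref{sec:measure}: there one starts from a L\'evy reference measure and multiplies by $\psi_t$ to produce the LRB law, whereas here one starts from the LRB law $\Q$ and divides by $\psi_t$ to recover the L\'evy law. The plan is to verify everything directly from the two descriptions of the $\Q$-law of $\{L_{tT}\}$ that are already available, namely the Markov transition law (\ref{eq:LRBtranslaw}) and the finite-dimensional law (\ref{eq:LRBlaw}). Throughout write $\psi_t=\psi_t(\R;L_{tT})$ and $M_t=\psi_t^{-1}$.

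First I would check that $\{M_t\}_{0\le t<T}$ is a strictly positive $\Q$-martingale (with respect to $\{\F^L_t\}$) with $M_0=1$; this is exactly what is needed for $\dd\p{L}/\dd\Q|_{\F^L_t}=M_t$ to consistently define a probability measure $\p{L}$, equivalent to $\Q$ on each $\F^L_t$ with $t<T$. Here $M_0=1$ since $\psi_0(\R;\cdot)=\nu(\R)=1$, and $0<M_t<\infty$ $\Q$-a.s.\ by (\ref{eq:density_conditions}). For the martingale property, the Markov property together with (\ref{eq:LRBtranslaw}) gives, for $0\le s<t<T$,
\[
\E_{\Q}[\,M_t\mid\F^L_s\,]=\int_{-\infty}^{\infty}\frac{1}{\psi_t(\R;y)}\,\frac{\psi_t(\R;y)}{\psi_s(\R;L_{sT})}\,f_{t-s}(y-L_{sT})\d y=\frac{1}{\psi_s(\R;L_{sT})}=M_s,
\]
because $f_{t-s}$ is a probability density and because the $\Q$-transition kernel charges only points $y$ with $0<\psi_t(\R;y)<\infty$, so the integrand is unambiguous.

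Next I would identify the $\p{L}$-law of $\{L_{tT}\}_{0\le t<T}$ through its finite-dimensional distributions. Fix $n\in\N$ and $0<t_1<\cdots<t_n<T$. Since $M_{t_n}$ is the $\F^L_{t_n}$-density of $\p{L}$ with respect to $\Q$, for bounded measurable $g$ one has $\E_{\p{L}}[g(L_{t_1,T},\ldots,L_{t_n,T})]=\E_{\Q}[M_{t_n}\,g(L_{t_1,T},\ldots,L_{t_n,T})]$. Marginalising (\ref{eq:LRBlaw}) over the terminal value gives
\[
\Q[L_{t_1,T}\in\dd x_1,\ldots,L_{t_n,T}\in\dd x_n]=\psi_{t_n}(\R;x_n)\prod_{i=1}^{n}\left[f_{t_i-t_{i-1}}(x_i-x_{i-1})\d x_i\right],
\]
and since $M_{t_n}=\psi_{t_n}(\R;L_{t_n,T})^{-1}$ while this $\Q$-law is supported where $0<\psi_{t_n}(\R;L_{t_n,T})<\infty$, the factor $\psi_{t_n}(\R;x_n)$ cancels:
\[
\p{L}[L_{t_1,T}\in\dd x_1,\ldots,L_{t_n,T}\in\dd x_n]=\prod_{i=1}^{n}\left[f_{t_i-t_{i-1}}(x_i-x_{i-1})\d x_i\right].
\]
These are precisely the finite-dimensional densities of the L\'evy process $\{L_t\}\in\mathcal{C}[0,T]$; hence under $\p{L}$ the increments $L_{t_i,T}-L_{t_{i-1},T}$ are independent with the law of $L_{t_i-t_{i-1}}$, so $\{L_{tT}\}$ has stationary independent increments. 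Since $L_{0T}=0$ and $L_{t,T}-L_{s,T}\to0$ in probability as $t\downarrow s$ (stochastic continuity of $\{L_t\}$), the c\`adl\`ag process $\{L_{tT}\}_{0\le t<T}$ is, under $\p{L}$, a L\'evy process with $L_{tT}$ having density $f_t$. Equivalently, the same cancellation applied to (\ref{eq:LRBtranslaw}) yields directly $\p{L}[L_{tT}\in\dd y\mid\F^L_s]=f_{t-s}(y-L_{sT})\d y$.

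The only delicate point is measure-theoretic: one must ensure that $M_t$ is $\Q$-a.s.\ well defined, finite and non-zero, and that dividing by $\psi_{t_n}(\R;\cdot)$ in the finite-dimensional computation neither creates mass on, nor destroys mass over, the set where $\psi_{t_n}$ vanishes or blows up. This is precisely where part~\ref{item:measurecond} of Definition~\ref{def:LRB} and the bounds (\ref{eq:density_conditions}) enter. Beyond this bookkeeping there is no new ingredient: the proposition merely records that the present change of measure inverts the one constructed in Section~\ref{sec:measure}.
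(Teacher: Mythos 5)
Your proof is correct, but it is organised in the opposite direction from the paper's. The paper never proves this proposition directly: in Section \ref{sec:measure} it \emph{assumes} the existence of a measure $\p{L}$ under which $\{L_{tT}\}$ is a L\'evy process with density $f_t$, shows that $\{\psi_t\}$ is a $\p{L}$-martingale by means of the Chapman--Kolmogorov convolution $\int f_{T-t}(z-x-y)f_{t-s}(y)\d y=f_{T-s}(z-x)$, defines $\p{L}^{\mathrm{rb}}$ by multiplying by $\psi_t$, verifies that the resulting one-step transition kernel is the LRB kernel of Proposition \ref{prop:markov}, and then simply ``restates'' the conclusion with the roles of the two measures exchanged. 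You instead start from the given LRB law $\Q$, prove that $\psi_t^{-1}$ is a $\Q$-martingale (which, using the transition law (\ref{eq:LRBtranslaw}), reduces to the one-line cancellation you display, rather than a convolution identity), and then identify the $\p{L}$-law through the finite-dimensional distributions obtained by marginalising (\ref{eq:LRBlaw}) over the terminal value. Your route has two advantages: it proves the proposition exactly as stated without presupposing the existence of the L\'evy reference measure, and the identification via finite-dimensional densities pins down the full law (independence and stationarity of all increments at once) rather than only the Markov transition kernel. The paper's route has the advantage of exhibiting $\psi_t$ itself as the density process, which is the object reused later in the option-pricing section. Your closing remark about where condition \ref{item:measurecond} of Definition \ref{def:LRB} and the bounds (\ref{eq:density_conditions}) are needed is exactly the right bookkeeping, and is handled at the same level of rigour as in the paper.
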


In the case of a discrete state space a similar result is obtained.
\begin{prop}
    Let $\p{L}$ be defined by
    \begin{equation}
        \left. \frac{\dd \p{L}}{\dd \Q} \right|_{\F^M_t}= \left[ \sum_{k=-\infty}^{\infty} \phi_t(a_k;M_{tT}) \right]^{-1}
    \end{equation}
    for $t\in[0,T)$.
    Then $\p{L}$ is a probability measure.
    Under $\p{L}$, $\{M_{tT}\}_{0\leq t <T}$ is a \levy process, and $M_{tT}$ has mass function $Q_t(a)$.
\end{prop}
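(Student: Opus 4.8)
The plan is to replay, on the discrete side, the Esscher-type change of measure of Section~\ref{sec:measure}, working directly under $\Q$. Abbreviate $\Phi_t:=\sum_{k=-\infty}^{\infty}\phi_t(a_k;M_{tT})$ for $0\le t<T$; this is the discrete counterpart of $\psi_t(\R;L_{tT})$ and is exactly the normalising constant appearing in~(\ref{eq:D_terminal}) and in the transition probabilities~(\ref{eq:discreteProb}). Two preliminary facts are needed: first, $\Phi_0=\sum_k P(a_k)=1$, because $P$ is a probability mass function; second, $0<\Phi_t<\infty$ $\Q$-a.s.\ for $t<T$. The latter follows from the $n=1$ case of the finite-dimensional probabilities of $\{M_{tT}\}$: summing that formula over the terminal value gives $\Q[M_{tT}=a_i]=Q_t(a_i)\,\Phi_t(a_i)$, where $\Phi_t(\xi):=\sum_k\phi_t(a_k;\xi)$, so $\Phi_t(\,\cdot\,)$ is finite and nonzero at every state charged by $M_{tT}$.

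I would then show that $\{\Phi_t^{-1}\}_{0\le t<T}$ is a strictly positive $\Q$-martingale (with respect to $\{\F^M_t\}$) with initial value $1$. Using that $\{M_{tT}\}$ is Markov under $\Q$ (the discrete analogue of Proposition~\ref{prop:markov}) together with the transition probabilities~(\ref{eq:discreteProb}), for $0\le s<t<T$ one has
\begin{align*}
\E_{\Q}\!\left[\Phi_t^{-1}\,\middle|\,\F^M_s\right]
&=\sum_{j=-\infty}^{\infty}\Phi_t(a_j)^{-1}\,\Q\!\left[M_{tT}=a_j\,\middle|\,M_{sT}\right]
=\sum_{j=-\infty}^{\infty}\Phi_t(a_j)^{-1}\,\frac{\Phi_t(a_j)}{\Phi_s(M_{sT})}\,Q_{t-s}(a_j-M_{sT})\\
&=\frac{1}{\Phi_s(M_{sT})}\sum_{m=-\infty}^{\infty}Q_{t-s}(a_m)=\Phi_s^{-1},
\end{align*}
where the interchange of the sum with the conditional expectation is legitimate because the summands are nonnegative, and $\sum_m Q_{t-s}(a_m)=1$ since $Q_{t-s}$ is a probability mass function. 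Hence $\E_{\Q}[\Phi_t^{-1}]=1$ for every $t<T$, the densities $\dd\p{L}/\dd\Q=\Phi_t^{-1}$ on $\F^M_t$ are consistent in $t$, and they define a probability measure $\p{L}$, equivalent to $\Q$ on $\F^M_t$ for each $t<T$ by the strict positivity of $\Phi_t^{-1}$.

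Finally I would compute the finite-dimensional probabilities of $\{M_{tT}\}$ under $\p{L}$. Summing the finite-dimensional law of the LRB over its terminal value yields $\Q[M_{t_1,T}=a_{k_1},\ldots,M_{t_n,T}=a_{k_n}]=\Phi_{t_n}(a_{k_n})\prod_{i=1}^{n}Q_{t_i-t_{i-1}}(a_{k_i}-a_{k_{i-1}})$ for $0<t_1<\cdots<t_n<T$ (with the convention $a_{k_0}=t_0=0$). Since $\Phi_{t_n}$ equals the constant $\Phi_{t_n}(a_{k_n})$ on the event $\{M_{t_n,T}=a_{k_n}\}$,
\begin{align*}
\p{L}\!\left[M_{t_1,T}=a_{k_1},\ldots,M_{t_n,T}=a_{k_n}\right]
&=\E_{\Q}\!\left[\Phi_{t_n}^{-1}\,\1_{\{M_{t_1,T}=a_{k_1},\ldots,M_{t_n,T}=a_{k_n}\}}\right]\\
&=\Phi_{t_n}(a_{k_n})^{-1}\,\Q\!\left[M_{t_1,T}=a_{k_1},\ldots,M_{t_n,T}=a_{k_n}\right]
=\prod_{i=1}^{n}Q_{t_i-t_{i-1}}(a_{k_i}-a_{k_{i-1}}).
\end{align*}
These are precisely the finite-dimensional probabilities of the \levy process $\{M_t\}$; since the law of a c\`adl\`ag process is determined by its finite-dimensional distributions, $\{M_{tT}\}_{0\le t<T}$ is a \levy process under $\p{L}$ with $M_{tT}$ having mass function $Q_t$, which is the assertion.

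I do not anticipate a substantive obstacle: this is a routine conditioning computation. The points deserving genuine care are the strict positivity and finiteness of $\Phi_t$ for $t<T$ --- which is exactly what makes $\p{L}$ an honest probability measure equivalent to $\Q$ --- and the restriction to the half-open interval $[0,T)$: at $t=T$ the quantity $\Phi_T$ can be infinite, since the terminal law is only required to be absolutely continuous with respect to $Q_T$ rather than equal to it, so $\Phi_T^{-1}$ degenerates and the construction cannot be pushed to the closed interval.
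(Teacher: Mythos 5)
Your proof is correct, and it is worth noting that it runs in the opposite direction to the argument the paper actually gives. The paper proves the continuous-state analogue by starting from a measure under which the process is a L\'evy process, showing that $\psi_t(\R;L_{tT})$ is a martingale \emph{under that measure}, tilting by it to obtain the LRB law, and then merely \emph{restating} the result with the reciprocal density relative to $\Q$; the discrete case is then asserted to be ``similar'' with no proof at all. You instead work directly under $\Q$: you verify that $\Phi_t^{-1}$ is a strictly positive $\Q$-martingale with unit initial value using the transition probabilities (\ref{eq:discreteProb}), and then you compute the full finite-dimensional distributions under the new measure rather than only the one-step transition law, recovering exactly the product form $\prod_i Q_{t_i-t_{i-1}}(a_{k_i}-a_{k_{i-1}})$. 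This buys you a self-contained proof of the proposition as literally stated (the paper's route proves the converse construction and infers the stated form by inversion), at the cost of leaning on the unproved discrete Markov transition formula for the martingale step --- though even that could be bypassed, since your finite-dimensional computation alone already yields both the consistency of the densities and the L\'evy law. Two small points deserve a remark: the identity $\sum_j Q_{t-s}(a_j - a_i) = 1$ tacitly uses that the state space $\{a_i\}$ is closed under the relevant differences, which the paper itself assumes implicitly in its Chapman--Kolmogorov equation; and your closing observation about $t=T$ correctly identifies why the construction is confined to $[0,T)$, namely that $P$ need only be absolutely continuous with respect to $Q_T$, so the density process degenerates at the terminal time.
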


\subsection{Dynamic consistency}
In this section we show that LRBs possess the so-called dynamic consistency property.
For $\{L_{tT}\}$, this property means the process $\{\eta_{t}\}$ defined by setting
\begin{equation}
    \label{eq:define_eta}
    \eta_{t}=L_{tT}-L_{sT} \qquad (s\leq t\leq T)
\end{equation}
is an LRB for fixed $s$ and $L_{sT}$ given.
Defining the filtration $\{\F^{\eta}_t\}$ by
\begin{equation}
    \F^{\eta}_t=\s\left(L_{sT}, \{\eta_u\}_{s\leq u\leq t}  \right),
\end{equation}
we see that
\begin{equation}
    \Q\left[ F\left(\{L_{uT}\}_{s\leq u \leq T}\right) \left|\, \F^{\eta}_t \right.\right]
        =   \Q\left[ F\left(\{L_{uT}\}_{s\leq u \leq T}\right) \left|\, \F^{L}_t \right.\right],
\end{equation}
for $0\leq s<t<T$ and $F$ an arbitrary measurable functional.
Suppose two market participants, trader A and trader B, watch the evolution of $\{L_{tT}\}$; trader A watching from $t=0$ and trader B watching from $t=s$.
The filtration of trader A, $\{\F^{L}_t\}$, is larger than the filtration of trader B, $\{\F^{\eta}_t\}$, but they have a common view of the future evolution of $\{L_{tT}\}$.
This is the Markov property.
The dynamic consistency property is stronger.
It states that the filtration of trader B can be regarded as being generated by an LRB, in this case $\{\eta_t\}$, plus some information about the current state of the world, in this case $L_{sT}$.

Later we shall model the market filtration as being generated by a set of LRBs.
Through the dynamic consistency property, we can consider each market participant's filtration to be generated by a set of LRBs, regardless of the time in which they enter the market, and without their views being inconsistent with other participants.

The dynamic consistency property was introduced in \citet{BHM1} with regard to Brownian random bridges, and was shown by the same authors to hold for gamma random bridges in \citep{BHM3}.

Fix a time $s<T$.
Given $L_{sT}$, we define a process $\{\eta_{t}\}$ by (\ref{eq:define_eta}).
We shall show that $\{\eta_t\}$ is an LRB.
At time $s$, the law of $\eta_{T}$ is
\begin{equation}
    \nu^*(A)=\nu_{s}(A+L_{sT}) \qquad \text{for all $A\in\mathcal{B}(\R)$,}
\end{equation}
where $A+y$ denotes the shifted set given by
\begin{equation}
    A+y= \left\{x: x-y\in A\right\}.
\end{equation}
Given the terminal value $\eta_{T}$, the finite-dimensional distributions of $\{\eta_{t}\}$ are given by
\begin{align}
    &\Q\left[\left.\eta_{s+t_1}\in\dd x_1,\ldots,\eta_{s+t_n}\in\dd x_n  \,\right| L_{sT}, \eta_{T}=z\right] \nonumber
    \\ &\qquad=\Q\left[\left.L_{s+t_1,T}-L_{sT}\in\dd x_1,\ldots,L_{s+t_n,T}-L_{sT}\in\dd x_n  \,\right| L_{sT}, L_{TT}-L_{sT}=z\right] \nonumber
    \\ &\qquad=\Q\left[\left.L_{s+t_1}-L_{s}\in\dd x_1,\ldots,L_{s+t_n}-L_{s}\in\dd x_n  \,\right| L_{s}, L_{T}-L_{s}=z\right] \nonumber
    \\ &\qquad=\Q\left[\left.L_{t_1}\in\dd x_1,\ldots,L_{t_n}\in\dd x_n  \,\right| L_{T-s}=z\right]      \nonumber
    \\ &\qquad=\frac{f_{T-s-t_n}\left(z-x_n\right)}{f_{T-s}\left(z\right)} \prod_{i=1}^{n} f_{t_i-t_{i-1}}\left(x_i-x_{i-1} \right),
\end{align}
for every $n\in\N_+$, every $0=t_0<t_1<\cdots<t_n<T-s$, and every $(x_1,\ldots,x_n)\in\R^n$, where $x_0=0$.
Then we have
\begin{multline}
\Q\left[\left.\eta_{s+t_1}\in\dd x_1,\ldots,\eta_{s+t_n}\in\dd x_n, \eta_{T}\in \dd z  \,\right| L_{sT}\right]
\\=\frac{f_{T-s-t_n}\left(z-x_n\right)}{f_{T-s}\left(z\right)} \prod_{i=1}^{n} f_{t_i-t_{i-1}}\left(x_i-x_{i-1} \right) \, \nu^*(\dd z).
\end{multline}
Comparison of this expression to (\ref{eq:LRBlaw}) shows that the process $\{\eta_{s+t}\}_{0\leq t\leq T-s}$ has the law $\lrbc([0,T-s],\{f_t\},\nu^*)$,
and so the law of $\{\eta_t\}_{s\leq t \leq T}$ is  $\lrbc([s,T],\{f_t\},\nu^*)$.

In the discrete case, we define $\{\eta_t\}$ by
\begin{align}
    \eta_t=M_{tT}-M_{sT} \qquad (s\leq t \leq T).
\end{align}
Then, given $M_{sT}$, $\{\eta_t\}$ has the law $\lrbd([s,T],\{Q_t\},P^*)$, where $P^*$ is defined by
\begin{equation}
    P^*(a)=P_s(a+M_{sT}).
\end{equation}

\subsection{Increments of LRBs}
The form of the transition law in Proposition \ref{prop:markov} shows that in general the increments of an LRB are not independent.
The special cases of LRBs with independent increments are discussed later.
A result that holds for all LRBs is that they have stationary increments:
\begin{prop}
    \label{prop:stat}
    For $s,t,u$ satisfying $0\leq s<u<T$ and $0<t\leq T-u$, we have
    \begin{align}
        \Q\left[L_{u+t, T}-L_{uT} \leq z \left|\, L_{sT} \right.\right]&=\Q[L_{s+t,T}-L_{sT}\leq z\left|\, L_{sT} \right.],
        \\ \intertext{and}
        \Q\left[M_{u+t, T}-M_{uT} \leq z \left|\, M_{sT} \right.\right]&=\Q[M_{s+t,T}-M_{sT}\leq z\left|\, M_{sT} \right.].
    \end{align}
\end{prop}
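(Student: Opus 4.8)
The plan is to compute the $L_{sT}$-conditional law of the increment $L_{u+t,T}-L_{uT}$ in closed form and to read off that the resulting expression does not depend on $u$; since the proposition conditions only on $L_{sT}$, comparing this expression with its value at $u=s$ then gives the first identity. The second identity (for $\{M_{tT}\}$) is treated afterwards in exactly the same way.

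First I would invoke the Markov property of $\{L_{tT}\}$ (Proposition \ref{prop:markov}): conditioning on $L_{sT}=x$ and then on the intermediate value $L_{uT}$ gives, for $s<u$ and $u+t<T$,
\begin{equation*}
\Q[L_{uT}\in\dd y \,|\, L_{sT}=x]=\frac{\psi_u(\R;y)}{\psi_s(\R;x)}\,f_{u-s}(y-x)\,\dd y,
\qquad
\Q[L_{u+t,T}\in\dd v \,|\, L_{uT}=y]=\frac{\psi_{u+t}(\R;v)}{\psi_u(\R;y)}\,f_t(v-y)\,\dd v .
\end{equation*}
Multiplying these, changing variables to the increment $z=v-y$, and integrating out $y$ leaves
\begin{equation*}
\Q[L_{u+t,T}-L_{uT}\in\dd z \,|\, L_{sT}=x]=\frac{f_t(z)}{\psi_s(\R;x)}\left(\int_{-\infty}^{\infty} f_{u-s}(y-x)\,\psi_{u+t}(\R;y+z)\,\dd y\right)\dd z .
\end{equation*}

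The substantive step is to collapse the bracketed integral. Inserting the definition \eqref{eq:psit} of $\psi_{u+t}(\R;\,\cdot\,)$ and applying the Chapman--Kolmogorov (convolution) identity for $\{f_t\}$, namely $\int_{-\infty}^{\infty} f_{u-s}(y-x)\,f_{T-u-t}(w-z-y)\,\dd y=f_{T-s-t}(w-x-z)$, one finds that the bracketed integral equals $\psi_{s+t}(\R;x+z)$, so that
\begin{equation*}
\Q[L_{u+t,T}-L_{uT}\in\dd z \,|\, L_{sT}=x]=\frac{f_t(z)\,\psi_{s+t}(\R;x+z)}{\psi_s(\R;x)}\,\dd z .
\end{equation*}
This is manifestly independent of $u$. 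Taking $u=s$ directly in the transition law of Proposition \ref{prop:markov} (legitimate since $s+t<T$) yields the very same expression for $\Q[L_{s+t,T}-L_{sT}\in\dd z\,|\,L_{sT}=x]$, and integrating in $z$ gives the asserted identity.

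The rest is bookkeeping, and there is no deeper obstacle. One should check that every density and every instance of $\psi$ appearing above is well defined on the region over which it is integrated; this is guaranteed by condition \ref{item:measurecond} of Definition \ref{def:LRB} together with the computations \eqref{eq:ImADensity}--\eqref{eq:check_den} and the bounds \eqref{eq:density_conditions} --- note in particular that $t\le T-u<T-s$ forces $s+t<T$, so $\psi_{s+t}$ is indeed given by \eqref{eq:psit}. The boundary case $u+t=T$ runs in the same way once $\psi_{u+t}(\R;v)\,\dd v$ is replaced by the terminal transition kernel $\psi_u(\dd v;y)/\psi_u(\R;y)$ of Proposition \ref{prop:markov} and $y$ is integrated out, the same convolution identity again producing a $u$-independent answer. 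A slicker route, if one prefers, is to first apply the dynamic consistency property established above: conditionally on $L_{sT}$, the shifted process $\{L_{s+v,T}-L_{sT}\}_{0\le v\le T-s}$ is itself an LRB started deterministically at $0$, whereupon the claim reduces to the special case $s=0$, $x=0$ of the computation above. Finally, the discrete statement for $\{M_{tT}\}$ is obtained verbatim with sums over the state-space $\{a_i\}$, the mass functions $Q_t$, and the weights $\phi_t$ in place of the integrals, $f_t$, and $\psi_t$, and with the discrete Chapman--Kolmogorov identity in place of the convolution identity; the only point to watch is again the non-degeneracy of the quantities involved, which is exactly what the defining conditions of $\lrbc([0,T],\{f_t\},\nu)$ (resp.\ $\lrbd([0,T],\{Q_t\},P)$) provide.
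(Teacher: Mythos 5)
Your argument is correct, and its engine---cancelling the intermediate $\psi_u(\R;y)$ in the product of Markov transition kernels and then collapsing $\int f_{u-s}(y-x)\,\psi_{u+t}(\R;y+z)\,\dd y$ to $\psi_{s+t}(\R;x+z)$ via the Chapman--Kolmogorov convolution---is exactly the computation the paper performs. The difference is organizational: the paper carries out this computation only for $s=0$ (obtaining $\Q[L_{u+t,T}-L_{uT}\in\dd z]=\psi_t(\R;z)f_t(z)\,\dd z=\Q[L_{tT}\in\dd z]$) and then handles $s>0$ by invoking dynamic consistency to reduce to the shifted LRB $\{L_{vT}-L_{sT}\}$ with terminal law $\nu^*$, whereas you run the general-$s$ computation in one pass and never need dynamic consistency---the route you mention only as a ``slicker'' alternative is in fact the paper's. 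You also treat the endpoint $t=T-u$ by substituting the terminal kernel $\psi_u(\dd v;y)/\psi_u(\R;y)$ and redoing the convolution (which does go through, by Tonelli), while the paper disposes of it more cheaply by stochastic continuity. Your version buys a single self-contained calculation and an explicit closed form $\psi_{s+t}(\R;x+z)f_t(z)/\psi_s(\R;x)\,\dd z$ for the conditional increment law; the paper's buys a reuse of machinery it has already built and a slightly shorter endpoint argument. Both are sound, and your remarks on well-definedness (via condition \ref{item:measurecond} and \eqref{eq:density_conditions}) cover the measure-theoretic housekeeping adequately.
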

\begin{proof}
    We provide the proof for $\{L_{tT}\}$.
    The proof for $\{M_{tT}\}$ is similar.
    Throughout the proof we assume that $t<T-u$.
    The case $t=T-u$ follows from the stochastic continuity of $\{L_{tT}\}$.
    First we assume that $s=0$.
    From (\ref{eq:LRBtranslaw}), we have
    \begin{equation}
        \Q[L_{u+t, T}\in\dd y,L_{uT}\in\dd x]=\psi_{u+t}(\R;y)f_t(y-x)f_u(x) \d x \d y.
    \end{equation}
    Then we have
    \begin{align}
        \Q[L_{u+t, T}-L_{uT}\in\dd z,L_{uT}\in\dd x]
         &=\psi_{u+t}(\R;z+x)f_t(z)f_u(x) \d x \d z \nonumber
        \\ &=\int_{w=-\infty}^{\infty}\frac{f_{T-(u+t)}(w-z-x)}{f_T(w)} \d w \, f_t(z) f_u(x) \d x \d z.
    \end{align}
    Integrating over $x$ and changing the order of integration yields
    \begin{align}
        \Q[L_{u+t, T}-L_{uT}\in\dd z]
            &=\int_{w=-\infty}^{\infty}\int_{x=-\infty}^{\infty}f_{T-(u+t)}(w-z-x)f_u(x) \d x \, \frac{\dd w}{f_T(w)} \, f_t(z)  \d z \nonumber
        \\ &=\int_{w=-\infty}^{\infty}\frac{f_{T-t}(w-z)}{f_T(w)}\d w \, f_t(z)  \d z \nonumber
        \\ &=\psi_t(\R,z) f_t(z)  \d z \nonumber
        \\ &=\Q[L_{tT}\in\dd z].
    \end{align}

    For the case $s>0$, we use the dynamic consistency property.
    For $s$ fixed and $L_{sT}$ given, the process $\{\eta_{uT}\}_{s\leq u\leq T}=\{L_{uT}-L_{sT}\}_{s\leq u\leq T}$ is an LRB with the law $\lrbc([s,T],\{f_t\},\nu^*)$, where $\nu^*(A)=\nu_s(A+L_{sT})$.
    We have
    \begin{align}
        \Q\left[ L_{u+t , T} -L_{uT}\in \dd z \left|\, L_{sT} \right.\right]
         &=\Q\left[ \eta_{u+t , T} -\eta_{uT}\in \dd z \left|\, L_{sT} \right.\right] \nonumber
        \\ &=\Q\left[ \eta_{t T} \in \dd z \left|\, L_{sT} \right.\right] \nonumber
        \\ &=\int_{-\infty}^{\infty} \frac{f_{T-t}(w-z)}{f_{T-s}(w)}\nu^*(\dd w) \, f_{t-s}(z) \d z \nonumber
        \\ &=\int_{-\infty}^{\infty} \frac{f_{T-t}(w-z+L_{sT})}{f_{T-s}(w-L_{sT})}\nu_s(\dd w) \, f_{t-s}(z) \d z \nonumber
        \\ &=\frac{1}{\psi_s(\R;L_{sT})}
                \int_{-\infty}^{\infty} \frac{f_{T-t}(w-z+L_{sT})}{f_{T}(w)}\nu(\dd w) \, f_{t-s}(z) \d z \nonumber
        \\ &=\frac{\psi_t(\R;z+L_{sT})}{\psi_s(\R;L_{sT})}\, f_{t-s}(z) \d z \nonumber
        \\ &=\Q[L_{tT}-L_{sT} \in \dd z\left|\, L_{sT} \right].
    \end{align}
\end{proof}
When $\{L_{tT}\}$ is integrable, the stationary increments property offers enough structure to allow the calculation of the expected value of $L_{tT}$:
\begin{coro}
\label{coro:ExpectLRB}
If $\E[|L_{tT}|]<\infty$ for all $t\in(0,T]$ then
\begin{align}
        \E\left[L_{tT} \left|\, L_{sT} \right.\right]&=\frac{T-t}{T-s}L_{sT} + \frac{t-s}{T-s} \E\left[L_{TT} \left|\, L_{sT} \right.\right]
        &&(s<t),
        \\\intertext{and if $\E[|M_{tT}|]<\infty$ for all $t\in(0,T]$ then}
        \E\left[M_{tT} \left|\, M_{sT} \right.\right]&=\frac{T-t}{T-s}M_{sT} + \frac{t-s}{T-s} \E\left[M_{TT} \left|\, M_{sT} \right.\right]
        &&(s<t).
\end{align}
\end{coro}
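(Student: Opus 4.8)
The plan is to deduce everything from the stationary-increments property (Proposition~\ref{prop:stat}), after reducing to a statement at $s=0$. Throughout I use that $\{L_{tT}\}$ is Markov (Proposition~\ref{prop:markov}), so $\E[\,\cdot\mid L_{sT}]=\E[\,\cdot\mid\F^L_s]$. First I would reduce to the case $s=0$. Since $L_{0T}=0$, the $s=0$ form of the asserted identity is simply $\E[L_{tT}]=(t/T)\,\E[L_{TT}]$; and, by the dynamic consistency property, for fixed $s$ with $L_{sT}$ given, the process $\{\eta_{s+h}:=L_{s+h,T}-L_{sT}\}_{0\le h\le T-s}$ is an LRB that starts at $0$, is integrable (because $\E[|L_{tT}|]<\infty$ on $(0,T]$ and $L_{sT}$ is $\F^L_s$-measurable), and — by the general form of Proposition~\ref{prop:stat}, read under $\Q[\,\cdot\mid L_{sT}]$ — has stationary increments. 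Hence once the $s=0$ case is proved it applies verbatim to $\{\eta_{s+h}\}$ on the horizon $[0,T-s]$ under $\Q[\,\cdot\mid L_{sT}]$, and substituting $\eta_T=L_{TT}-L_{sT}$ and rearranging produces the stated formula. The discrete case is handled identically, with $Q_t$, $P$ and sums replacing $f_t$, $\nu$ and integrals, using the discrete part of Proposition~\ref{prop:stat}.

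So it remains to show $\E[L_{tT}]=(t/T)\,\E[L_{TT}]$ for $t\in[0,T]$ when $L_{0T}=0$ and $\E[|L_{tT}|]<\infty$ on $(0,T]$. Put $g(t):=\E[L_{tT}]$, which is finite on $[0,T]$ with $g(0)=0$. The proof of Proposition~\ref{prop:stat} establishes, in the case $s=0$, the identity $\Q[L_{u+t,T}-L_{uT}\in\dd z]=\Q[L_{tT}\in\dd z]$ for admissible $u,t$; hence $L_{u+t,T}-L_{uT}$ is integrable with mean $g(t)$. Writing $L_{t_1+t_2,T}=L_{t_1,T}+(L_{t_1+t_2,T}-L_{t_1,T})$ and taking expectations gives $g(t_1+t_2)=g(t_1)+g(t_2)$ whenever $t_1,t_2>0$ and $t_1+t_2\le T$. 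Thus $g$ is an additive (Cauchy) function on $[0,T]$, which already forces $g(qT)=q\,g(T)$ for all rational $q\in[0,1]$.

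The hard part will be upgrading this from rational to arbitrary $t$, i.e.\ ruling out the pathological (non-measurable) solutions of Cauchy's equation. I would do this by observing that $g$ is Borel measurable: a c\`adl\`ag process is jointly measurable in $(\omega,t)$, so $t\mapsto\E[L_{tT}^{\pm}]$ are measurable by Tonelli's theorem, and $g$ is their (everywhere finite) difference. Since a Lebesgue-measurable additive function is linear (a classical fact), $g(t)=(t/T)\,g(T)=(t/T)\,\E[L_{TT}]$ for every $t\in[0,T]$, which completes the argument. Apart from this regularity point the proof is bookkeeping: the reduction algebra, and the interchange of expectation with the equality-in-law of increments, both licensed by the integrability hypothesis.
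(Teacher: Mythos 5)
Your proof is correct, and its overall architecture coincides with the paper's: handle $s=0$ first, use the stationary-increments property (Proposition~\ref{prop:stat}) to pin down $\E[L_{tT}]$ at rational multiples of $T$, then extend to all $t$, and finally lift to $s>0$ via dynamic consistency with $\nu^*(A)=\nu_s(A+L_{sT})$ — exactly the reduction the paper performs, merely placed at the end rather than the start. The one genuinely different ingredient is the passage from rational to arbitrary $t$. The paper takes an increasing sequence of rationals $q_i\uparrow t/T$ and interchanges limit and expectation, citing monotone convergence; this is slightly delicate, since for a general LRB (e.g.\ the Brownian case) the variables $L(q_iT,T)$ are not monotone in $i$, so the interchange really rests on stochastic continuity together with a domination or uniform-integrability argument that the paper leaves implicit. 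You instead observe that $g(t)=\E[L_{tT}]$ is additive by stationarity and Borel measurable (joint measurability of the c\`adl\`ag process plus Tonelli), and invoke the classical fact that measurable solutions of Cauchy's equation are linear. This sidesteps the convergence issue entirely and is, if anything, the more robust route; the paper's limit argument is more elementary but needs the unstated regularity to be supplied. Both are complete proofs, and your treatment of the $s>0$ reduction and of the discrete case matches the paper's.
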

\begin{proof}
    We provide the proof for $\{L_{tT}\}$.
    The proof for $\{M_{tT}\}$ is similar.
    The case $t=T$ is immediate, so we assume that $t<T$.
    First we consider the case $s=0$.
    Suppose that $t=mT/n$, where $m,n \in \mathbb{N}_+$ and $m<n$.
    We wish to show that
    \begin{equation}
        \E[L_{tT}]=\frac{m}{n} \E[L_{TT}].
    \end{equation}
    Writing $L(t,T)=L_{tT}$, define the random variables $\{\D{i}\}$ by
    \begin{equation}
        \D{i}=L\left(\tfrac{i}{n}T,T\right)-L\left(\tfrac{(i-1)}{n}T,T\right).
    \end{equation}
    It follows from Proposition \ref{prop:stat} that the $\D{i}$'s are identically distributed, and by assumption they are integrable.
    Hence we have
    \begin{equation}
            \E[\D{i}]=\frac{1}{n} \, \E\left[ \sum_{i=1}^n \D{i} \right]=\frac{1}{n}\,\E[L_{TT}].
    \end{equation}
    Then, as required, we have
    \begin{equation}
        \E\left[L\left(\tfrac{m}{n}T,T\right)\right]=\E\left[\sum_{i=1}^m \D{i}\right]=\frac{m}{n}\,\E[L_{TT}].
    \end{equation}

    For general $t$, choose an increasing sequence of positive rational numbers $\{q_i\}$ such that $\lim_{i\rightarrow\infty} q_i=t/T$.
    By use of the monotone convergence theorem one obtains
    \begin{equation}
        \E[L(t,T)]=\E\left[\lim_{i\rightarrow\infty} L\left(q_i T,T\right) \right]
            =\lim_{i\rightarrow\infty} \E\left[ L\left(q_i T,T\right) \right]=\frac{t}{T}\,\E[L_{TT}].
    \end{equation}

    For the case $s>0$, we use the dynamic consistency property.
    For $s$ fixed and $L_{sT}$ given, the process
    \begin{equation}
            \eta_{tT}=L_{tT}-L_{sT} \qquad (s\leq t\leq T)
    \end{equation}
    is an LRB with law $\lrbc([s,T],\{f_t\},\nu^*)$, where $\nu^*(A)=\nu_s(A+L_{sT})$.
    Then we have
    \begin{align}
        \E\left[L_{tT} \left|\, L_{sT} \right.\right]&=L_{sT}+\E[\eta_{tT}\left|\, L_{sT} \right.] \nonumber
        \\ &=L_{sT}+\frac{t-s}{T-s} \int_{-\infty}^{\infty} z \, \nu^*(\dd z) \nonumber
        \\ &=L_{sT}+\frac{t-s}{T-s} \int_{-\infty}^{\infty} (z-L_{sT}) \, \nu_s(\dd z) \nonumber
        \\ &=\frac{T-s}{T-s} L_{sT}+\frac{t-s}{T-s} \E\left[L_{TT} \left|\, L_{sT} \right.\right].
    \end{align}
\end{proof}

We have shown that the increments of LRBs are stationary, so it is natural to ask when the increments are independent,
i.e.~when is an LRB a \levy process?
The answer lies in the functional form of $\psi_t(\R;y)$.

For $0\leq s<t<T$, the likelihood that $L_{tT}=y$ given that $L_{sT}=x$ is
\begin{equation}
    \label{eq:qden}
    q(t,y;s,x)=\frac{\psi_t(\R;y)}{\psi_s(\R;x)}f_{t-s}(y-x).
\end{equation}
If $\{L_{tT}\}$ has stationary, independent increments then
\begin{equation}
    q(t,y;s,x)= q(t-s,y-x;0,0).
\end{equation}
Therefore the ratio
\begin{equation}
    \frac{\psi_t(\R;y)}{\psi_s(\R;x)}
\end{equation}
is a function of the differences $t-s$ and $y-x$.
Thus if we have
\begin{equation}
    \label{eq:psiind}
    \psi_t(\R;y)=a\exp(by+ct),
\end{equation}
for constants $a$, $b$ and $c$, then $\{L_{tT}\}$ is a \levy process.
There are constraints on $a$, $b$ and $c$ since (\ref{eq:qden}) is a probability density.
When $b=c=0$ we have $\nu(\dd z)=f_T(z)\d z$ which is the case where $\{L_{tT}\}\law\{L_t\}$.

\bigskip

\noindent{\bf Example.}
In the Brownian case we consider a process $\{W_{tT}\}$ with law
    \[\lrbc([0,T],\{f_t\},f_T(z-\th T) \d z),\]
where $f_t(x)$ is the normal density with zero mean and variance $t$, given by (\ref{eq:NormalDensity}).
In other words, $\{W_{tT}\}$ is a standard Brownian motion conditioned so that $W_{TT}$ is a normal random variable with mean $\th T$ and variance $T$.
In this case, we have
\begin{align}
    \psi_t(\R;y)&=\int_{-\infty}^{\infty}\frac{f_{T-t}(z-y)}{f_{T}(z)} f_T(z-\th T) \d z \nonumber
    \\ &=\exp\left(\th y-\frac{\th}{2}t\right).
\end{align}
Simplifying the expression for the transition densities of the process $\{W_{tT}\}$ allows one to verify that $\{W_{tT}\}$ is a Brownian motion with drift $\th$.
It is notable, by Girsanov's theorem, that $\{\psi_t(\R;W_t)\}$ is the Radon-Nikod\'ym density process that transforms a standard Brownian motion into a Brownian motion with drift $\th$.
Hence we can alternatively deduce that $\{W_{tT}\}$ is a Brownian motion with drift $\th$ from the analysis in Section \ref{sec:measure}.

\bigskip

\noindent{\bf Example.}
In the gamma case, we consider a process $\{\G_{tT}\}$ with law
\[\lrbc([0,T],\{f_t\},\kappa^{-1} f_T(z/\kappa) \d z),\]
where $f_t(x)$ is the gamma density with mean $mt$ and variance $mt$ defined by (\ref{eq:gamma_den_2}), and $\kappa>0$ is constant.
Then $\{\G_{tT}\}$ is a gamma process with mean $m$ and variance $m$ at $t=1$,
conditioned so that $\G_{TT}$ has a gamma distribution with mean $\kappa mT$ and variance $\kappa^2 m T$.
We have:
\begin{align}
    \psi_t(\R;y)&=\int_{-\infty}^{\infty}\frac{f_{T-t}(z-y)}{f_{T}(z)} \frac{f_T(z/\kappa)}{\kappa} \d z \nonumber
    \\ &=\kappa^{-mt}\exp\left((1-\kappa^{-1}) y\right).
\end{align}
The transition density of $\{\G_{tT}\}$ is
\begin{equation}
    \Q[\G_{tT}\in\dd y \,|\, \G_{sT}=x]=\1_{\{y>x\}}\frac{(y-x)^{m(t-s)-1} \e^{-(y-x)/\kappa}}{\kappa^{m(t-s)} \G(m(t-s))}\d y.
\end{equation}
Hence $\{\G_{tT}\}$ is a gamma process with mean $\kappa m$ and variance $\kappa^2m$ at $t=1$.

\subsubsection{Increment distributions}
Partition the time interval $[0,T]$ by $0=t_0<t_1<t_2<\cdots<t_n=T$.
Then define the increments $\{\D{i}\}_{i=1}^n$ and $\{\a_i\}_{i=1}^n$ by
\begin{align}
    \label{eq:deltadef}
    \D{i}&=L_{t_i,T}-L_{t_{i-1},T}
    \\ \a_i&=t_i-t_{i-1}.
\end{align}
Assume that $\nu$ has no \emph{continuous singular part} \citep{Sato1999}.
Denoting the Dirac delta function centred at $z$ by $\delta_z(x)$, $x\in\R$, we can write
\begin{equation}
    \label{eq:nu_nosing}
    \nu(\dd z)=\sum_{i=-\infty}^{\infty} v_i \delta_{z_i}(z)\d z + p(z) \d z,
\end{equation}
for some $\{a_i\}\subset \R$, $\{z_i\}\subset \R_+$, and $p:\R\rightarrow\R_+$.
Here $p(z)$ is the density of the continuous part of $\nu$, and $v_i$ is a point mass of $\nu$ located at $z_i$.
By (\ref{eq:LRBlaw}), the joint law of the random vector $(\D{1},\ldots , \D{n})^{\tp}$ is given by
\begin{equation}
    \label{eq:incden}
    \Q[\D{1}\in\dd y_1\ldots,\D{n}\in\dd y_n]= \widetilde{f}\left( \sum_{i=1}^ny_i \right)
                        \prod_{i=1}^n f_{\a_i}(y_i) \d y_i,
\end{equation}
where
\begin{equation}
    \widetilde{f}(z)=\frac{p(z)+\sum_{i=-\infty}^{\infty} v_i \delta_{z_i}(z)}{f_T(z)}.
\end{equation}
Equation (\ref{eq:incden}) shows that $(\D{1},\ldots , \D{n})^{\tp}$ has a generalized multivariate Liouville distribution as defined by \citet{GRIV}.
The classical multivariate Liouville distribution is obtained when $f_t(x)$ is the density of a gamma distribution (see \citep{GRI,GRII,GRIII,FKN1990}).
A survey of Liouville distributions can be found in \citet{GR2001}.
\citet{BNJ1991} construct a generalized Liouville distribution by conditioning a vector of independent inverse-Gaussian random variables on their sum.

In the discrete case, the joint distribution of increments also has a generalized Liouville distribution.
Define the increments $\{D_i\}$ by
\begin{equation}
    D_{i}=M_{t_i,T}-M_{t_{i-1},T}.
\end{equation}
Then we can write
\begin{equation}
    \Q[D_{1}\in\dd y_1\ldots,D_{n}\in\dd y_n]= \widetilde{Q}\left( \sum_{i=1}^ny_i \right)
                        \prod_{i=1}^n \dd Q_{\a_i}(y_i),
\end{equation}
where
\begin{equation}
    \widetilde{Q}(z)=\frac{\sum_{i=-\infty}^{\infty} P(a_i) \delta_{a_i}(z)}{Q_T(z)}.
\end{equation}

\subsubsection{The reordering of increments}
We are able to extend the Markov property of LRBs.
If we partition the path of an LRB into increments, then the Markov property means that future increments depend on the past only through the \emph{sum} of past increments.
We shall show that for LRBs the ordering of the increments does not matter for this to hold---%
given the values of any set of increments of an LRB (past or future), the other increments depend on this subset only through the sum of its elements.

Let $\pi$ be a permutation of $\{1,2,\ldots,n\}$.
We define the partial sum $S^{\pi}_m$ by
\begin{equation}
    S^{\pi}_m=\sum_{i=1}^m \D{\pi(i)} \qquad \text{for $m=1,2,\ldots,n$,}
\end{equation}
where the $\{\D{i}\}$ are defined as in (\ref{eq:deltadef});
and we define the partition $0=t^{\pi}_0<t^{\pi}_1<\cdots<t^{\pi}_n=T$ by
\begin{equation}
    t^{\pi}_{j+1}=\sum_{i=1}^{j}\a_{\pi(i)} \qquad \text{for $j=1,2,\ldots,n-1$.}
\end{equation}

\begin{prop} \label{prop:inc}
    We may extend the Markov property of $\{L_{tT}\}$ to the following:
    \begin{multline}
        \label{eq:increment_prop}
        \Q\left[\D{\pi(m+1)}\leq y_{m+1},\ldots,\D{\pi(n)}\leq y_n \left|\, \D{\pi(1)},\ldots,\D{\pi(m)}\right.\right]=
            \\ \Q\left[\left.\D{\pi(m+1)}\leq y_{m+1},\ldots,\D{\pi(n)}\leq y_{n} \,\right| S^{\pi}_m \right].
    \end{multline}
    If $\nu$ has no singular continuous part, then
    \begin{multline}
        \Q\left[\left.\D{\pi(m+1)}\in\dd y_{m+1},\ldots,\D{\pi(n)}\in\dd y_n \,\right| S^{\pi}_m \right]=
         \\ \frac{\widetilde{f}\left(S^{\pi}_m + \sum_{i=m+1}^n y_i\right)}{\psi_{t^{\pi}_m}(\R;S^{\pi}_m)}
        \prod_{i=m+1}^n f_{\a_{\pi(i)}}(y_i)\d y_i.
    \end{multline}
\end{prop}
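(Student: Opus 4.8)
The plan is to reduce the permuted statement to the already-established ordered Markov property by relabelling. The key observation is that the joint law \eqref{eq:incden} of $(\D{1},\ldots,\D{n})^{\tp}$ is \emph{symmetric in a precise sense}: it is a product $\prod_i f_{\a_i}(y_i)$ of increment factors, each attached to its own time-length $\a_i$, multiplied by $\widetilde f$ of the \emph{total} $\sum_i y_i$. Under a permutation $\pi$, the vector $(\D{\pi(1)},\ldots,\D{\pi(n)})^{\tp}$ therefore has joint density $\widetilde f\bigl(\sum_{i=1}^n y_i\bigr)\prod_{i=1}^n f_{\a_{\pi(i)}}(y_i)$ — exactly the form of \eqref{eq:incden} but with the partition $\{\a_i\}$ replaced by $\{\a_{\pi(i)}\}$, i.e.\ with the refined partition $t^{\pi}_0<t^{\pi}_1<\cdots<t^{\pi}_n$. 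So $(\D{\pi(1)},\ldots,\D{\pi(n)})^{\tp}$ has the same joint law as the \emph{ordered} increment vector of an LRB (with the same $\{f_t\}$ and the same $\nu$) taken along the partition $\{t^{\pi}_j\}$. This is the crux, and it is where I expect to spend the most care: one must check that $\{t^{\pi}_j\}$ really is a partition of $[0,T]$ (it is, since $\sum_i\a_{\pi(i)}=\sum_i\a_i=T$ and the partial sums are strictly increasing as all $\a_i>0$), and that the presence of a continuous singular part in $\nu$ is what forces the ``$\widetilde f$'' shortcut to be unavailable, which is exactly why the displayed density formula is stated only under the no-singular-continuous-part hypothesis while the distributional identity \eqref{eq:increment_prop} holds in general.

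Granting the relabelling, the first assertion \eqref{eq:increment_prop} is immediate: for the LRB $\{L_{tT}\}$ along the partition $\{t^{\pi}_j\}$, the ordered increments are $\D{\pi(1)},\ldots,\D{\pi(n)}$, their partial sums are $L_{t^{\pi}_m,T}=S^{\pi}_m$, and Proposition~\ref{prop:markov} (the Markov property) says precisely that the conditional law of $(L_{t^{\pi}_{m+1},T},\ldots,L_{t^{\pi}_n,T})$ given $\F^L_{t^{\pi}_m}$ depends only on $L_{t^{\pi}_m,T}=S^{\pi}_m$. Rewriting each $L_{t^{\pi}_j,T}$ in terms of the increments $\D{\pi(i)}$ and $S^{\pi}_m$ turns this into \eqref{eq:increment_prop}. (Equivalently one can invoke the reordering-invariance of the Liouville-type density \eqref{eq:incden} directly: conditioning the density $\widetilde f(\sum_{i=1}^n y_i)\prod f_{\a_{\pi(i)}}(y_i)$ on $y_1,\ldots,y_m$ and using the Chapman--Kolmogorov identity $\int f_{\a}(u)f_{\b}(y-u)\d u = f_{\a+\b}(y)$ to collapse $f_{\a_{\pi(1)}},\ldots,f_{\a_{\pi(m)}}$ into $f_{t^{\pi}_m}$ shows the dependence on $(y_1,\ldots,y_m)$ enters only through $\sum_{i\le m}y_i=S^{\pi}_m$.)

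For the second assertion, assume $\nu$ has no singular continuous part so that $\widetilde f$ as in Section~\ref{sec:LRB_FDD} is defined, and compute the conditional density explicitly. Starting from the permuted joint density $\widetilde f\bigl(\sum_{i=1}^n y_i\bigr)\prod_{i=1}^n f_{\a_{\pi(i)}}(y_i)$, divide by the marginal density of $(\D{\pi(1)},\ldots,\D{\pi(m)})^{\tp}$. That marginal is obtained by integrating out $y_{m+1},\ldots,y_n$; performing those $n-m$ convolutions with the Chapman--Kolmogorov identity collapses $\prod_{i=m+1}^n f_{\a_{\pi(i)}}$ against $\widetilde f$ and yields $\widetilde f$ smeared over a length-$(T-t^{\pi}_m)$ window, which is exactly $\psi_{t^{\pi}_m}(\R;S^{\pi}_m)\prod_{i=1}^m f_{\a_{\pi(i)}}(y_i)$ by the definition \eqref{eq:psit} of $\psi_t$ (note $T-t^{\pi}_m=\sum_{i=m+1}^n\a_{\pi(i)}$). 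Taking the ratio gives
\begin{equation*}
    \frac{\widetilde f\bigl(S^{\pi}_m+\sum_{i=m+1}^n y_i\bigr)}{\psi_{t^{\pi}_m}(\R;S^{\pi}_m)}\prod_{i=m+1}^n f_{\a_{\pi(i)}}(y_i)\d y_i,
\end{equation*}
as claimed. The only genuine obstacle is bookkeeping: keeping the index set $\{\pi(m+1),\ldots,\pi(n)\}$ straight through the convolutions, and making sure the normalisation $\psi_{t^{\pi}_m}(\R;S^{\pi}_m)$ is finite and positive — but this is guaranteed by \eqref{eq:density_conditions} applied to the LRB on the partition $\{t^{\pi}_j\}$, since $t^{\pi}_m<T$. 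The discrete analogue for $\{M_{tT}\}$ follows by the same argument with sums replacing integrals and $Q_t$, $\phi_t$, $\widetilde Q$ in place of $f_t$, $\psi_t$, $\widetilde f$.
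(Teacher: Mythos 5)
Your proposal follows essentially the same route as the paper: you identify the permuted increment vector in law with the ordered increment vector along the rearranged partition $\{t^{\pi}_j\}$, invoke the Markov property of the LRB on that partition, and obtain the conditional density by dividing the joint Liouville-type density by the marginal of the conditioning variables. The only minor difference is that the paper establishes the equality in law for arbitrary $\nu$ by keeping the terminal value explicit through the measure $\nu(\dd z)/f_T(z)$ rather than through $\widetilde{f}$, which is what makes the first assertion hold without the no-singular-continuous-part hypothesis; your own remark on this point already indicates that fix.
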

\begin{proof}
    Define the increments $\{\tD{i}\}$ by
    \begin{equation}
        \tD{i}=L_{t^{\pi}_{n},T}-L_{t^{\pi}_{n-1},T}.
    \end{equation}
    The law of the random vector $(\tD{1},\ldots , \tD{n-1},\sum_{1}^n\tD{i})^{\tp}$ is given by
    \begin{multline}
        \Q\left[\tD{1}\in\dd y_{1},\ldots,\tD{n-1}\in \dd y_{n-1},\sum_{i=1}^n\tD{i}\in\dd z\right]=
            \\  \frac{\nu(\dd z)}{f_T(z)}f_{\a_{\pi(n)}}\left(z-\sum_{i=1}^{n-1}y_i \right)\prod_{i=1}^{n-1} f_{\a_{\pi(i)}}(y_i) \d y_i.
    \end{multline}
    This is also the law of $(\D{\pi(1)},\ldots,\D{\pi(n-1)},\sum_1^n\D{\pi(i)})^{\tp}$; hence
    \begin{equation}
        (\D{\pi(1)},\ldots,\D{\pi(n)})\law (\tD{1},\ldots , \tD{n}).
    \end{equation}
    The Markov property of LRBs gives
    \begin{multline}
        \Q\left[\tD{m+1}\leq y_{m+1},\ldots,\tD{n}\leq y_n \left|\, \tD{1},\ldots,\tD{m}\right.\right]=
            \\ \Q\left[\tD{m+1}\leq y_{m+1},\ldots,\tD{n}\leq y_{n} \left|\, \sum_{i=1}^m\tD{i} \right.\right],
    \end{multline}
    and so we have
    \begin{multline}
        \Q\left[\D{\pi(m+1)}\leq y_{m+1},\ldots,\D{\pi(n)}\leq y_n \left|\, \D{\pi(1)},\ldots,\D{\pi(m)}\right.\right]=
            \\ \Q\left[\left.\D{\pi(m+1)}\leq y_{m+1},\ldots,\D{\pi(n)}\leq y_{n} \,\right| S^{\pi}_m \right].
    \end{multline}
    This proves the first part of the proposition.

    For the second part of the proof we assume that $\nu$ takes the form (\ref{eq:nu_nosing}).
    Note that
    \begin{equation}
        L_{t^{\pi}_m,T}=\sum_{i=1}^m \tD{i},
    \end{equation}
    and that the density of $L_{t^{\pi}_m,T}$ is
    \begin{equation}
        x \mapsto f_{t^{\pi}_m}(x)\psi_{t^{\pi}_m}(\R;x)
        =\int_{z=-\infty}^{\infty} \frac{f_{t^{\pi}_m}(x)f_{T-t^{\pi}_m}(z-x)}{f_T(z)}\,\nu(\dd z).
    \end{equation}
  The elements of the vector $(L_{t^{\pi}_m,T},\tD{m+1},\ldots,\tD{n})^{\tp}$ are non-overlapping increments of $\{L_{tT}\}$, and the law of the vector is given by
    \begin{multline}
        \Q\left[L_{t^{\pi}_m,T}\in \dd x, \tD{m+1}\in\dd y_{m+1},\ldots,\tD{n}\in \dd y_{n}\right]=
        \\\widetilde{f}\left(x+ \sum_{i=m+1}^ny_i \right) f_{t^{\pi}_m}(x) \d x \,
                    \prod_{i=m+1}^n f_{\a_{\pi(i)}}(y_i) \d y_i.
    \end{multline}
    Thus we have
    \begin{align}
        &\Q\left[\left.\tD{m+1}\in \dd y_{m+1},\ldots,\tD{n}\in \dd y_n \, \right| L_{t^{\pi}_m,T}=x\right] \nonumber
        \\ &\qquad\qquad\qquad=\frac{\Q\left[\tD{m+1}\in \dd y_{m+1},\ldots,\tD{n}\in \dd y_n , L_{t^{\pi}_m,T}\in \dd x \right]}
                                {\Q\left[L_{t^{\pi}_m,T}\in \dd x \right]} \nonumber
        \\ &\qquad\qquad\qquad=\frac{\widetilde{f}\left(x+ \sum_{i=m+1}^ny_i \right)\prod_{i=m+1}^n f_{\a_{\pi(i)}}(y_i)}
                    {\psi_{t^{\pi}_m}(\R;S^{\pi}_m)}.
    \end{align}
\end{proof}

We note that \citet{GRIV} prove that if $(\D{1},\D{2},\ldots,\D{n})^{\tp}$ has a generalized Liouville distribution then equation (\ref{eq:increment_prop}) holds.

\bigskip
We can use Proposition \ref{prop:inc} to extend the dynamic consistency property.
In particular we have the following:

\begin{coro}\label{coro:inc}
    \begin{enumerate}
        \item[a.]
            Fix times $s_1,T_1$ satisfying $0<T_1\leq T-s_1$.
            The time-shifted, space-shifted partial process
            \begin{equation}
                \eta_{t,T_1}^{(1)}=L_{s_1+t,T}-L_{s_1,T}, \qquad (0\leq t \leq T_1),
            \end{equation}
            is an LRB with the law $\lrbc([0,T_1],\{f_t\},\nu^{(1)})$, where $\nu^{(1)}$ is a probability law on $\R$ with density $f_{T_1}(x)\psi_{T_1}(\R;x)$.
        \item[b.]
            Construct the partial processes $\{\eta^{(i)}_{t,T_i}\}$, $i=1,\ldots,n$,
            from non-overlapping portions of $\{L_{tT}\}$ in a similar way to that above.
            The intervals $[s_i,s_i+T_i]$, $i=1,\ldots,n$, are non-overlapping except possibly at the endpoints.
            Set $\eta^{(i)}_{t,T_i}=\eta^{(i)}_{T_i,T_i}$ when $t>T_i$.
            If $u>t$, then
            \begin{multline}
                \Q\left[\left. \eta^{(1)}_{u,T_1}-\eta^{(1)}_{t,T_1}\leq x_1,\ldots,\eta^{(n)}_{u,T_n}-\eta^{(n)}_{t,T_n}\leq x_n \,\right| \F^{\eta}_t \right]=
                \\\Q\left[ \eta^{(1)}_{u,T_1}-\eta^{(1)}_{t,T_1}\leq x_1,\ldots,\eta^{(n)}_{u,T_n}-\eta^{(n)}_{t,T_n}\leq x_n \left|\, \sum_{i=1}^n \eta^{(i)}_{t,T_i} \right.\right],
            \end{multline}
            where
            \begin{equation}
                \F^{\eta}_{t}=\s\left(\left\{ \eta^{(i)}_{s,T_i} \right\}_{0\leq s \leq t}, i=1,2,\ldots,n  \right).
            \end{equation}
    \end{enumerate}
\end{coro}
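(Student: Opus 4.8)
The plan is to obtain both parts from results already in hand: part~(a) from the dynamic consistency property together with the elementary fact that the restriction of an LRB to an initial subinterval is again an LRB, and part~(b) from the reordering-of-increments result, Proposition~\ref{prop:inc}.

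For part~(a) I would proceed in three steps (assuming $T_1<T$, the case $T_1=T$ being trivial). First I would record the auxiliary observation that if $\{L_{tT}\}$ has law $\lrbc([0,T],\{f_t\},\nu)$ and $0<\tau<T$, then $\{L_{tT}\}_{0\le t\le\tau}$ has law $\lrbc([0,\tau],\{f_t\},\mu)$, where $\mu(\dd z)=f_\tau(z)\,\psi_\tau(\R;z)\,\dd z$ is the law of $L_{\tau T}$; this is a direct check against (\ref{eq:LRBlaw}) and (\ref{eq:psit}), since $\tfrac{f_{\tau-t}(z-\xi)}{f_\tau(z)}\mu(\dd z)=f_{\tau-t}(z-\xi)\,\psi_\tau(\R;z)\,\dd z$ reproduces exactly the terminal weight that (\ref{eq:LRBlaw}) attaches over the horizon $[0,\tau]$. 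Second, applying this with $\tau=s_1+T_1$ (skipping this step when $s_1+T_1=T$) and then invoking dynamic consistency at time $s_1$, I get that, \emph{given} $L_{s_1,T}$, the process $\{\eta^{(1)}_{t,T_1}\}=\{L_{s_1+t,T}-L_{s_1,T}\}$ has law $\lrbc([0,T_1],\{f_t\},\mu^{*})$ with $\mu^{*}(A)=\mu_{s_1}(A+L_{s_1,T})$. Third, since the finite-dimensional laws in (\ref{eq:LRBlaw}) are \emph{linear} in the terminal measure, averaging over the law of $L_{s_1,T}$ shows that the unconditional law of $\{\eta^{(1)}_{t,T_1}\}$ is $\lrbc([0,T_1],\{f_t\},\nu^{(1)})$ with $\nu^{(1)}=\E[\mu^{*}]$; and $\E[\mu^{*}](A)=\Q[L_{s_1+T_1,T}-L_{s_1,T}\in A]=\Q[L_{T_1,T}\in A]$ by the stationary-increments property (Proposition~\ref{prop:stat}), which has the stated density $f_{T_1}(x)\psi_{T_1}(\R;x)$. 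As an alternative one could bypass the intermediate processes and compute the finite-dimensional law of $\{\eta^{(1)}_{t,T_1}\}$ directly from (\ref{eq:LRBlaw}), regarding $L_{s_1,T}$ and the successive increments of $\eta^{(1)}$ as non-overlapping increments of $\{L_{tT}\}$, integrating out $L_{s_1,T}$, and collapsing the resulting integral with the identity $\int f_{s_1}(y)f_{T-s_1-T_1}(x-y)\,\dd y=f_{T-T_1}(x)$.

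For part~(b), the key observation is that $\F^{\eta}_t$ is generated by the path-increments of $\{L_{tT}\}$ over the \emph{disjoint} windows $[s_i,\,s_i+\min(t,T_i)]$, $i=1,\dots,n$ (the convention $\eta^{(i)}_{t,T_i}=\eta^{(i)}_{T_i,T_i}$ for $t>T_i$ caps each window at $s_i+T_i$), that $\sum_{i=1}^n\eta^{(i)}_{t,T_i}$ is precisely the sum of those increments, and that each future increment $\eta^{(i)}_{u,T_i}-\eta^{(i)}_{t,T_i}$ is the increment of $\{L_{tT}\}$ over the disjoint window $[s_i+\min(t,T_i),\,s_i+\min(u,T_i)]$. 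I would fix a finite set $F\ni t$ contained in a countable dense subset of $[0,t]$; then $\s(\eta^{(i)}_{q,T_i}:q\in F,\,i\le n)$ is generated by finitely many increments $G_1,\dots,G_p$ of $\{L_{tT}\}$ whose windows partition $\bigcup_i[s_i,s_i+\min(t,T_i)]$, so $\sum_{j=1}^{p}G_j=\sum_i\eta^{(i)}_{t,T_i}$. Completing $\{G_j\}$ and the $n$ future increments to a partition $0=r_0<\dots<r_N=T$ of $[0,T]$ and choosing a permutation $\pi$ that lists the $G_j$ first, then the $n$ future increments, then the rest, Proposition~\ref{prop:inc} with $m=p$ gives that, conditionally on $(G_1,\dots,G_p)$, the remaining increments---hence, after marginalising out the irrelevant ones, the vector $(\eta^{(i)}_{u,T_i}-\eta^{(i)}_{t,T_i})_{i=1}^n$---depend on the past only through $S^{\pi}_p=\sum_i\eta^{(i)}_{t,T_i}$. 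Since the right-hand side does not depend on $F$, letting $F$ exhaust the dense set and applying L\'evy's upward martingale convergence theorem (using that $\F^{\eta}_t$ is, up to $\Q$-null sets, the $\sigma$-algebra generated by all such increments, by right-continuity of paths) yields the stated identity. The discrete analogue is identical, using the generalised Liouville structure of the increments of $\{M_{tT}\}$.

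The algebraic content is entirely supplied by Proposition~\ref{prop:inc} and, for part~(a), by the linearity of (\ref{eq:LRBlaw}) in the terminal measure, so the main obstacle is bookkeeping rather than a hard estimate. The two points needing care in part~(b) are: (i)~the capping convention, so that $\sum_i\eta^{(i)}_{t,T_i}$ really is the total of the past increments and each future increment is an honest (possibly degenerate) increment of $\{L_{tT}\}$ over a window disjoint from all the others, including the one that terminates at $T$ where the measure $\nu$ sits; and (ii)~the passage from conditioning on $\F^{\eta}_t$, which is generated by continuous path fragments, to conditioning on finitely many increments, which is where the martingale-convergence step enters and where the limit is seen to be the \emph{same} $\s\big(\sum_i\eta^{(i)}_{t,T_i}\big)$-measurable object for every finite approximation.
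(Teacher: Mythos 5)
Your argument is correct: the paper states this corollary without proof, presenting it as a direct consequence of Proposition \ref{prop:inc} and the dynamic consistency property, and your write-up fills in exactly that intended derivation (part (a) via restriction, dynamic consistency, stationarity of increments from Proposition \ref{prop:stat}, and linearity of the finite-dimensional laws in the terminal measure; part (b) via Proposition \ref{prop:inc} applied to finite families of disjoint window increments, followed by martingale convergence to pass to $\F^{\eta}_t$). The two points you flag as needing care --- the capping convention and the identification of $\F^{\eta}_t$ with the $\sigma$-algebra generated by the increments --- are indeed the only delicate steps, and you handle both correctly.
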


\begin{rem}
The partial processes of Corollary \ref{coro:inc} are dependent, and
\begin{equation}
    \Q\left[\left. \eta^{(i)}_{tT} \in \dd x \,\right| \F^{\eta}_s \right]=
    \Q\left[\eta^{(i)}_{tT} \in \dd x \left|\, \eta^{(i)}_{sT}, \sum_{j=1}^{n} \eta^{(j)}_{sT} \right.\right],
\end{equation}
for $0\leq s<t \leq T$.
\end{rem}

We state but do not prove a discrete analogue of Proposition \ref{prop:inc}, which is as follows:
\begin{prop}
    One can extend the Markov property of $\{M_{tT}\}$ to the following:
    \begin{multline}
        \Q\left[D_{\pi(m+1)}\leq y_{m+1},\ldots,D_{\pi(n)}\leq y_n \left|\, D_{\pi(1)},\ldots,D_{\pi(m)}\right.\right]=
            \\ \Q\left[\left.D_{\pi(m+1)}\leq y_{m+1},\ldots,D_{\pi(n)}\leq y_{n} \,\right| R^{\pi}_m \right],
    \end{multline}
    where $R^{\pi}_m=\sum_{i=1}^m D_{\pi(i)}$.
    Furthermore,
    \begin{equation}
        \Q\left[\left.D_{\pi(m+1)}= y_{m+1},\ldots,D_{\pi(n)}= y_n \,\right| D^{\pi}_m \right]=
          \frac{\widetilde{Q}\left(R^{\pi}_m + \sum_{i=m+1}^n y_i\right)}{\sum_{k=-\infty}^{\infty}\phi_{t^{\pi}_m}(a_k;R^{\pi}_m)}
        \prod_{i=m+1}^n Q_{\a_{\pi(i)}}(y_i).
    \end{equation}
\end{prop}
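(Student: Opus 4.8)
The plan is to mirror the proof of Proposition \ref{prop:inc} exactly, replacing densities by probability mass functions and integrals by sums throughout. First I would introduce the reordered increments
\[
    \tD{i}=M_{t^{\pi}_i,T}-M_{t^{\pi}_{i-1},T},
\]
with the partition $\{t^{\pi}_j\}$ defined as before from the permuted interval lengths $\a_{\pi(i)}$. Using (\ref{eq:LRBlaw}) in its discrete form, I would write down the joint law of $(\tD{1},\ldots,\tD{n-1},\sum_{i=1}^n \tD{i})^{\tp}$ as
\[
    \widetilde{Q}(z)\, Q_{\a_{\pi(n)}}\!\Bigl(z-\sum_{i=1}^{n-1}y_i\Bigr)\prod_{i=1}^{n-1}Q_{\a_{\pi(i)}}(y_i),
\]
and observe that this is also the joint law of $(D_{\pi(1)},\ldots,D_{\pi(n-1)},\sum_{i=1}^n D_{\pi(i)})^{\tp}$, since the latter vector has mass function obtained from (\ref{eq:LRBlaw}) by marginalising over the intermediate coordinates and using the Chapman--Kolmogorov identity for $\{Q_t\}$; the symmetry of $\prod_i Q_{\a_{\pi(i)}}(y_i)$ together with the dependence of $\widetilde{Q}$ only on the total sum is what makes the two laws coincide. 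This gives the distributional identity $(D_{\pi(1)},\ldots,D_{\pi(n)})\law(\tD{1},\ldots,\tD{n})$.

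Next I would invoke the Markov property of $\{M_{tT}\}$ (the discrete analogue of Proposition \ref{prop:markov}) applied to the consecutive increments $\tD{1},\ldots,\tD{n}$, which gives
\[
    \Q\bigl[\tD{m+1}\leq y_{m+1},\ldots,\tD{n}\leq y_n \,\bigm|\, \tD{1},\ldots,\tD{m}\bigr]
      =\Q\Bigl[\tD{m+1}\leq y_{m+1},\ldots,\tD{n}\leq y_n \,\Bigm|\, \textstyle\sum_{i=1}^m\tD{i}\Bigr].
\]
Transferring this through the distributional identity yields the first displayed equation of the proposition with $R^{\pi}_m=\sum_{i=1}^m D_{\pi(i)}$ in place of $\sum_1^m \tD{i}$. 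For the explicit conditional mass function, I would note that $M_{t^{\pi}_m,T}=\sum_{i=1}^m\tD{i}$ has mass function $a\mapsto Q_{t^{\pi}_m}(a)\sum_{k}\phi_{t^{\pi}_m}(a_k;a)$, write the joint law of the non-overlapping increment vector $(M_{t^{\pi}_m,T},\tD{m+1},\ldots,\tD{n})^{\tp}$ from (\ref{eq:LRBlaw}) as
\[
    \widetilde{Q}\Bigl(x+\sum_{i=m+1}^n y_i\Bigr)\, Q_{t^{\pi}_m}(x)\prod_{i=m+1}^n Q_{\a_{\pi(i)}}(y_i),
\]
and divide by the marginal mass of $M_{t^{\pi}_m,T}$ to obtain the claimed ratio
\[
    \frac{\widetilde{Q}\bigl(R^{\pi}_m+\sum_{i=m+1}^n y_i\bigr)}{\sum_{k=-\infty}^{\infty}\phi_{t^{\pi}_m}(a_k;R^{\pi}_m)}\prod_{i=m+1}^n Q_{\a_{\pi(i)}}(y_i).
\]

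I do not anticipate a serious obstacle, since the proof is entirely parallel to the continuous case; the only points requiring mild care are bookkeeping ones. One must check that the Chapman--Kolmogorov summation collapsing the intermediate coordinates of $\{M_{tT}\}$ into the single increment $\tD{i}$ is valid, which follows from the finite-dimensional law (\ref{eq:LRBlaw}) together with the convolution identity for $\{Q_t\}$; and one must ensure the relevant conditional masses are strictly positive so the Bayes-type divisions are well defined, which is guaranteed by the $\lrbd$ absolute-continuity hypothesis $P(a)>0\Rightarrow Q_T(a)>0$ and the positivity of $\sum_k\phi_{t}(a_k;\cdot)$ established in Section \ref{sec:LRB_FDD}. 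Since the proposition is stated without proof in the paper, it would suffice to remark that these steps are the verbatim discrete translation of the argument for Proposition \ref{prop:inc}.
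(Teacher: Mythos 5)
Your proposal is correct and takes essentially the same approach the paper intends: the paper states this proposition without proof, noting only that it is the discrete analogue of Proposition \ref{prop:inc}, and your argument is precisely the verbatim discrete translation of that proof (reordered increments, the exchangeability of the joint mass function through $\widetilde{Q}$ depending only on the total sum, the Markov property, and the Bayes-type division by the marginal mass $Q_{t^{\pi}_m}(x)\sum_k\phi_{t^{\pi}_m}(a_k;x)$). The bookkeeping points you flag (Chapman--Kolmogorov for $\{Q_t\}$ and positivity of the normalising sum) are exactly the right ones and are covered by the $\lrbd$ hypotheses.
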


Corollary \ref{coro:inc} can be extended to include LRBs with discrete state-spaces.

%
%

\section{Information-based asset pricing}

\subsection{BHM framework}
We begin with a brief overview of the BHM framework.
The approach was applied to credit risk in \citet{BHM1}, and this was extended to include stochastic interest rates in \citet{YR2007}.
A general asset pricing framework was proposed in \citet{BHM2} (see also \citet{MPhD2006}), and there have also been applications to inflation modelling (\citet{HM2008}), insider trading (\citet{BDFH2008}), insurance (\citet{BHM3}), and interest rate theory (\citet{HM2}).

We fix a finite time horizon $[0,T]$ and a probability space $(\Omega,\F,\Q)$.
We assume that the risk-free rate of interest $\{r_t\}$ is deterministic, and that
$r_t>0$ and $\int_t^{\infty}r_u \d u=\infty$, for all $t>0$.
Then the time-$s$ (no-arbitrage) price of a risk-free, zero-coupon bond maturing at time $t$ (paying a nominal amount of unity) is
\begin{equation}
    P_{st}=\exp\left( -\int_s^t r_u \d u \right) \qquad (s\leq t).
\end{equation}
For $t<T$, the time-$t$ price of a contingent cash flow $H_T$, due at time $T$, is given by an expression of the form
\begin{equation}
    \label{eq:price_formula}
    H_{tT}=P_{tT} \, \E[H_T \,|\, \F_t],
\end{equation}
where $\{\F_t\}$ is the \emph{market filtration}.
The sigma-algebra $\F_t$ represents the information available to market participants at time $t$.
In order for equation (\ref{eq:price_formula}) to be consistent with the theory of no-arbitrage pricing, we interpret $\Q$ to be the risk-neutral measure.

In such a set-up, the dynamics of the price process $\{H_{tT}\}$ are implicitly determined by the evolution of the market filtration $\{\F_t\}$.
We assume the existence of a (possibly multi-dimensional) \emph{information process} $\{\xi_{tT}\}_{0\leq t \leq T}$ such that
\begin{equation}
    \F_t=\s \left(\{\xi_{sT}\}_{0\leq s \leq t} \right).
\end{equation}
Thus $\{\xi_{tT}\}$ is responsible for the delivery of all information to the market participants.
The task of modelling the emergence of information in the market is reduced to that of specifying the law of the information process $\{\xi_{tT}\}$.

\subsubsection{Single $X$-factor market}
We assume that the cash flow $H_T$ can be written in the form
\begin{equation}
    H_T=h(X_T),
\end{equation}
for some function $h(x)$, and some market factor $X_T$.
We call $X_T$ an $X$-factor.
We assume that $\{\xi_{tT}\}$ is a one-dimensional process such that $\xi_{TT}=X_T$.
Then we have
\begin{equation}
    H_{tT}=P_{tT}\, \E[h(X_T) \,|\, \F_t]=P_{tT}\, \E[h(\xi_{TT}) \,|\, \F_t],
\end{equation}
which ensures that $H_{TT}=H_T$.
In the case where $\{\xi_{tT}\}$ is a Markov process, we have
\begin{equation}
    H_{tT}= P_{tT} \, \E[h(\xi_{TT}) \,|\, \xi_{tT}].
\end{equation}

\subsubsection{Multiple $X$-factor market}
In the more general framework, we model an asset that generates $N$ cash flows $H_{T_1},H_{T_2},\ldots,H_{T_N}$, which are to be received on the dates $T_1\leq T_2\leq \cdots \leq T_N$, respectively.
At time $T_k$, we assume that the vector of $X$-factors $X_{T_k}\in\R^{n_k}$ ($n_k\in\p{N}_+$) is revealed to the market, and we write
\begin{equation}
    X_{T_k}=\left(X^{(1)}_{T_k},X^{(2)}_{T_k},\ldots,X^{(n_k)}_{T_k}\right)^{\tp}.
\end{equation}
We assume the $X$-factors are mutually independent, and that
\begin{equation}
    H_{T_k}=h_k(X_{T_1},X_{T_2},\ldots,X_{T_k}),
\end{equation}
for some $h_k:\R^{n_1}\times\R^{n_2}\times\cdots\times\R^{n_k}\rightarrow\R$ which we call a cash-flow function.
For each $X$-factor $X^{(i)}_{T_j}$, there is a factor information process $\{\xi_t^{(i,j)}\}$ such that $\xi_t^{(i,j)}=X^{(i)}_{T_j}$ for $t\geq T_j$, and the factor information processes are mutually independent.
Setting $T=T_N$, we define the market information process $\{\xi_{tT}\}$ to be an $\R^{n_1+n_2+\cdots+n_N}$-valued process with each of its elements being a factor information process.
The market filtration $\{\F_t\}$ is generated by $\{\xi_{tT}\}$.
By construction, $H_{T_k}$ is $\F_t$-measurable for $t\geq T_k$.
The time-$t$ price of the cash flow $H_{T_k}$ is
\begin{equation}
    H_{tT}^{(k)}=\left\{
        \begin{aligned}
            &P_{t,T_k} \, \E\left[\left. h_k(X_{T_1},X_{T_2},\ldots,X_{T_k}) \,\right| \F_t\right] && \text{for $t< T_k$,}
            \\ &0 &&\text{for $t\geq T_k$.}
        \end{aligned}
        \right.
\end{equation}
Here we adopt the convention that cash flows have nil value at the time that they are due.
In other words, prices are quoted on an ex-dividend basis.
In this way the process $\{H^{(k)}_t\}$ is right-continuous at $t=T_k$.
The asset price process is then
\begin{equation}
    H_{tT}= \sum_{k=1}^n  H^{(k)}_{tT} \qquad (0\leq t\leq T).
\end{equation}

\subsection{\levy bridge information}
We consider a market with a single factor, which we denote $X_T$.
This $X$-factor is the size of a contingent cash flow to be received at time $T>0$, so we take $h(x)=x$.
For example, $X_T$ could be the redemption amount of a credit risky bond.
$X_T$ is assumed to be integrable and to have the \emph{a priori} probability law $\nu$ (we exclude the case where $X_T$ is constant).
Information is supplied to the market by an information process $\{\xi_{tT}\}$.
The law of $\{\xi_{tT}\}$ is $\lrbc([0,T],\{f_t\},\nu)$, and we set $\xi_{TT}=X_T$.
We assume throughout this section that the information process has a continuous state-space; the results can be extended to include LRB information processes with discrete state-spaces.

Since the information process has the Markov property, the price of the cash flow $X_T$ is given by
\begin{equation}
    X_{tT}=P_{tT}\, \E\left[X_T\left|\, \xi_{tT} \right. \right] \qquad (0\leq t \leq T).
\end{equation}
We note that $X_T$ is $\F_T$-measurable and $X_{TT}=X_T$, but $X_T$ is not $\F_t$-measurable for $t<T$ since we have excluded the case where $X_T$ is constant.
For $t\in(0,T)$, the $\F_t$-conditional law of $X_T$ as given by equation (\ref{eq:C_terminal}) is
\begin{equation}
    \nu_{t}(\dd z)=\frac{\psi_t(\dd z;\xi_{tT})}{\psi_t(\R;\xi_{tT})},
\end{equation}
where
\begin{equation}
    \psi_t(\dd z;\xi)=\frac{f_{T-t}(z-\xi)}{f_T(z)} \d z.
\end{equation}
Then we have
\begin{equation}
    X_{tT}=P_{tT} \int_{-\infty}^{\infty} z \, \nu_t(\dd z).
\end{equation}
When $\nu$ admits a density $p(z)$, the $\F_t$-conditional density of $X_T$ exists and is given by
\begin{equation}
    p_t(z)= \frac{f_{T-t}(z-\xi_{tT})p(z)}{\psi_t(\R;\xi_{tT}) f_T(z)}.
\end{equation}

\bigskip

\noindent{\bf Example.}
In the Brownian case the price is
\begin{equation}
    X_{tT}=P_{tT} \frac{\int_{-\infty}^{\infty} z \, \e^{\frac{1}{T-t}\left[\xi_{tT}z-\half \frac{t}{T}z^2\right]}\,\nu(\dd z)}
                {\int_{-\infty}^{\infty} \e^{\frac{1}{T-t}\left[\xi_{tT}z-\half \frac{t}{T}z^2\right]}\,\nu(\dd z)}.
\end{equation}
The following SDE can be derived for $\{X_{tT}\}$ (see \citep{BHM1,BHM2,MPhD2006,YR2007}):
\begin{equation}
    \label{eq:SDE}
    \dd X_{tT}=r_t X_{tT} \d t+ \frac{P_{tT} \var[X_T \,|\,\xi_{tT}]}{T-t} \d W_t,
\end{equation}
where $\{W_t\}$ is an $\{\F_t\}$-Brownian motion.

\bigskip

\noindent{\bf Example.}
In the gamma case we have
\begin{equation}
    X_{tT}=P_{tT}\frac{\int_{\xi_{tT}}^{\infty}(z-\xi_{tT})^{m(T-t)-1}z^{2-mT}\,\nu(\dd z)}{\int_{\xi_{tT}}^{\infty}(z-\xi_{tT})^{m(T-t)-1}z^{1-mT}\,\nu(\dd z)}.
\end{equation}

\subsection{European option pricing} \label{sec:Call_Option}
We consider the problem of pricing a European option on the price $X_{tT}$ at time $t$.
For a strike price $K$ and $0\leq s<t<T$, the time-$s$ price of a $t$-maturity call option on $X_{tT}$ is
\begin{equation}
    C_{st}=P_{st} \, \E\left[\left.(X_{tT}-K)^+\,\right| \xi_{sT} \right].
\end{equation}
The expectation can be expanded in the form
\begin{align}
    \E_{\Q}\left[\left.(X_{tT}-K)^+\,\right| \xi_{sT}\right]
     &=\E_{\Q}\left[\left.\left(P_{tT}\,\E_{\Q}[X_T\,|\,\xi_{tT}]-K  \right)^+\,\right| \xi_{sT}\right] \nonumber
    \\ &=\E_{\Q}\left[\left.\left(\int_{-\infty}^{\infty}(P_{tT}z-K) \, \nu_t(\dd z) \right)^+\,\right| \xi_{sT}\right] \nonumber
    \\ &=\E_{\Q}\left[\left.\frac{1}{\psi_t(\R;\xi_{tT})} \left(\int_{-\infty}^{\infty}(P_{tT}z-K) \, \psi_t(\dd z;\xi_{tT}) \right)^+\,\right| \xi_{sT}\right].
\end{align}
Recall that the Radon-Nikodym density process
\begin{equation}
    \left.\frac{\dd \p{L}}{\dd \Q}\right|_{\F_t}=\psi_t(\R;\xi_{tT})^{-1}
\end{equation}
defines a measure $\p{L}$ under which $\{\xi_{tT}\}_{0\leq t<T}$ is a \levy process.
By changing measure, we find that the expectation is
\begin{multline}
    \frac{1}{\psi_s(\R;\xi_{sT})}\,
    \E_{\p{L}}\left[\left.\left(\int_{-\infty}^{\infty}(P_{tT}z-K) \, \psi_t(\dd z;\xi_{tT}) \right)^+\,\right| \xi_{sT}\right]=
    \\ \frac{1}{\psi_s(\R;\xi_{sT})}\,
    \E_{\p{L}}\left[\left.\left(\int_{-\infty}^{\infty}(P_{tT}z-K)\frac{f_{T-t}(z-\xi_{tT})}{f_T(z)} \, \nu(\dd z) \right)^+\,\right| \xi_{sT}\right].
\end{multline}
Equation (\ref{eq:density_conditions}) states that $0<f_{T-s}(z-\xi_{sT})<\infty$.
Thus we can write the expectation in terms of the $\xi_{sT}$-conditional terminal law $\nu_s$ in the form
\begin{multline}
    \E_{\p{L}}\left[\left.\left(\int_{-\infty}^{\infty}
            (P_{tT}z-K)\frac{f_{T-t}(z-\xi_{tT})}{f_{T-s}(z-\xi_{sT})} \, \nu_s(\dd z) \right)^+\,\right| \xi_{sT}\right]
    =\\ \int_{-\infty}^{\infty} \left(\int_{-\infty}^{\infty}
            (P_{tT}z-K)\frac{f_{T-t}(z-x)}{f_{T-s}(z-\xi_{sT})} \, \nu_s(\dd z) \right)^+f_{t-s}(x-\xi_{sT}) \d x.
\end{multline}
We defined the (marginal) \levy bridge density $f_{tT}(x;z)$ by
\begin{equation}
    f_{tT}(x;z)=\frac{f_{T-t}(z-x)f_{t}(x)}{f_{T}(z)}.
\end{equation}
From this we can define the $\xi_{sT}$-dependent law $\mu_{st}(\dd x;z)$ by
\begin{equation}
    \mu_{st}(\dd x;z)=f_{t-s,T-s}(x-\xi_{sT},z-\xi_{sT}) \d x.
\end{equation}
Thus $\mu_{st}(\dd x;z)$ is the time-$t$ marginal law of a \levy bridge starting at the value $\xi_{sT}$ at time $s$, and terminating at the value $z$ at time $T$.
Defining the set $B_t$ by
\begin{equation}
    B_{t}=\left\{ x\in\R: \int_{-\infty}^{\infty}(P_{tT}z-K)\frac{f_{T-t}(z-x)}{f_{T}(z)} \, \nu(\dd z)>0 \right\},
\end{equation}
the expectation reduces to
\begin{equation}
    \int_{-\infty}^{\infty}(P_{tT}z-K)\mu_{st}(B_{t};z) \, \nu_s(\dd z).
\end{equation}
The option price is then given by
\begin{equation}
    C_{st}=P_{st} \int_{-\infty}^{\infty}(P_{tT}z-K)\mu_{st}(B_{t};z) \, \nu_s(\dd z).
\end{equation}

We can write $X_{tT}=\Lambda(t,\xi_{tT})$, for $\Lambda$ a deterministic function.
The set $B_t$ can then be written
\begin{equation}
    B_t=\left\{ \xi\in\R: \Lambda(t,\xi)>K \right\}.
\end{equation}
We see that if $\Lambda$ is increasing in its second argument then $B_t=(\xi^*_t,\infty)$ for some critical value $\xi^*_t$ of the information process.
$\Lambda$ is monotonic if $\{\xi_{tT}\}$ is a \levy process.

\bigskip

\noindent{\bf Example.}
In the Brownian case we have
\begin{equation}
    \Lambda(t,x)=P_{tT} \frac{\int_{-\infty}^{\infty} z \, \e^{\frac{1}{T-t}\left[x z-\half \frac{t}{T}z^2\right]}\,\nu(\dd z)}
                {\int_{-\infty}^{\infty} \e^{\frac{1}{T-t}\left[x z-\half \frac{t}{T}z^2\right]}\,\nu(\dd z)}.
\end{equation}
It can be shown that the function $\Lambda$ is increasing in its second argument (see \citep{BHM2,YR2007}); hence $B_t=(\xi_t^*,\infty)$ for the unique $\xi_t^*$ satisfying $\Lambda(t,\xi_t^*)=K$.
A short calculation verifies that $\mu_{st}(\dd x;z)$ is the normal law with mean $M(z)$ and variance $V$ given by
\begin{align}
    &M(z)=\frac{T-t}{T-s} \xi_{sT} + \frac{t-s}{T-s} z,
    &&V= \frac{t-s}{T-s}(T-t).
\end{align}
This is the time-$t$ marginal law of a Brownian bridge starting from the value $\xi_{sT}$ at time $s$, and finishing at the value $z$ at time $T$.
We have
\begin{equation}
    \mu_{st}(B_t;z)=1-\Phi\left[\frac{\xi_t^*-M(z)}{\sqrt{V}}\right]=\Phi\left[\frac{M(z)-\xi_t^*}{\sqrt{V}}  \right],
\end{equation}
where $\Phi[x]$ is the standard normal distribution function.
The option price is then
\begin{equation}
    C_{st}=P_{sT} \int_{-\infty}^{\infty} z\, \Phi\left[\frac{M(z)-\xi_t^*}{\sqrt{V}}\right] \nu_s(\dd z)
            +P_{st}K \int_{-\infty}^{\infty}\Phi\left[\frac{M(z)-\xi_t^*}{\sqrt{V}}  \right] \nu_s(\dd z).
\end{equation}

\bigskip

\noindent{\bf Example.}
In the gamma case we have
\begin{equation}
    \Lambda(t,x)=P_{tT}\frac{\int_{x}^{\infty}(z-x)^{m(T-t)-1}z^{2-mT}\,\nu(\dd z)}
                {\int_{x}^{\infty}(z-x)^{m(T-t)-1}z^{1-mT}\,\nu(\dd z)}.
\end{equation}
The monotonicity of $\Lambda(t,x)$ in $x$ was proved for $m(T-t)>1$ by \citet{BHM3}.
The authors also give a numerical example where $\Lambda(t,x)$ was not monotonic in $x$ for $m(T-t)<1$.
For all $t\in(0,T)$, we have
\begin{equation}
    \mu_{st}(\dd x;z)=\1_{\{\xi_{sT}<x< z\}}\, k(z)^{-1}
        \left(\frac{x-\xi_{st}}{z-\xi_{sT}}\right)^{m(t-s)-1} \left(\frac{z-y}{z-\xi_{sT}}\right)^{m(T-t)-1}
                     \d x,
\end{equation}
where $k(z)$ is the normalising constant
\begin{equation}
    k(z)={(z-\xi_{sT}) \, \mathrm{B}[m(t-s),m(T-t)]}.
\end{equation}
Hence $\mu_{st}(\dd x;z)$ is an $(z-\xi_{sT})$-scaled, $\xi_{sT}$-shifted, beta law with parameters $\a=m(t-s)$ and $\b=m(T-t)$.
This is the time-$t$ marginal law of a gamma bridge starting at the value $\xi_{sT}$ at time $s$, and terminating at the value $x$ at time $T$.
When $m(T-t)>1$, a critical $\xi^*_t$ exists such that $\Lambda(t,\xi_t^*)=K$.
Then $B_t=(\xi^*_t,\infty)$, and
\begin{align}
    \mu_{st}(B_t;z)&=1-I\left[\frac{\xi^*_t-\xi_{sT}}{z-\xi_{sT}};m(t-s),m(T-t)\right] \nonumber
    \\ &=I\left[\frac{z-\xi^*_t}{z-\xi_{sT}};m(T-t),m(t-s)\right].
\end{align}
Here $I[z;\a,\b]$ is the regularized incomplete beta function, defined for $\a,\b>0$ by
\begin{equation}
    I[z;\a,\b]=\frac{1}{\mathrm{B}[\a,\b]} {\int_0^z x^{\a-1} (1-x)^{\b-1} \d x}.
\end{equation}
The option price is then given by
\begin{multline}
    C_{st}=P_{sT} \int_{\xi_{sT}}^{\infty} z\, I\left[\frac{z-\xi^*_t}{z-\xi_{sT}};m(T-t),m(t-s)\right] \nu_s(\dd z)
            \\+P_{st}K \int_{\xi_{sT}}^{\infty}I\left[\frac{z-\xi^*_t}{z-\xi_{sT}};m(T-t),m(t-s)\right] \nu_s(\dd z).
\end{multline}

\subsection{Binary bond} \label{sec:Binary_Bond}
The simplest non-trivial contingent cash flow is $X_T\in\{k_0,k_1\}$, for $k_0< k_1$.
This is the pay-off from a zero-coupon, credit-risky bond that has principle $k_1$, and a fixed recovery rate $k_0/k_1$ on default.
Assume that, \emph{a priori}, $\Q[X_T=k_0]=p>0$ and $\Q[X_T=k_1]=1-p$.
Then
\begin{align}
    \Q[X_{T}=k_0 \,|\, \xi_{tT} ]&=\left(1+{\frac{f_T(k_0)}{f_T(k_1)}\frac{f_{T-t}(k_1-\xi_{tT})}{f_{T-t}(k_0-\xi_{tT})}\frac{1-p}{p}}  \right)^{-1},
    \\\intertext{and}
    \Q[X_{T}=k_1 \,|\, \xi_{tT} ]&=\left(1+{\frac{f_T(k_1)}{f_T(k_0)}\frac{f_{T-t}(k_0-\xi_{tT})}{f_{T-t}(k_1-\xi_{tT})}\frac{p}{1-p}}  \right)^{-1}.
\end{align}
The bond price process $\{X_{tT}\}$ associated with the given terminal cash flow is given by
\begin{equation}
    X_{tT}=P_{tT} \left(k_0\,\Q[X_{T}=k_0 \,|\, \xi_{tT} ]+k_1\,\Q[X_{T}=k_1 \,|\, \xi_{tT} ]  \right) \qquad(0\leq t \leq T).
\end{equation}

\bigskip
\noindent{\bf Example.}
In the Brownian case we have
\begin{align}
    \Q[X_{T}=k_0 \,|\, \xi_{tT} ]&=
        \left(1+\exp\left[-\half \frac{k_1-k_0}{T-t}(\tfrac{t}{T}(k_0+k_1)-2\xi_{tT})\right] \frac{1-p}{p}  \right)^{-1},
    \\\intertext{and}
    \Q[X_{T}=k_1 \,|\, \xi_{tT} ]&=
     \left(1+\exp\left[\half \frac{k_1-k_0}{T-t}(\tfrac{t}{T}(k_0+k_1)-2\xi_{tT})\right] \frac{p}{1-p}  \right)^{-1}.
\end{align}
Writing $\rho_i=\Q[X_{T}=k_i \,|\, \xi_{tT} ]$, note that
\begin{align}
    \var[X_T\,|\,\xi_{tT}]&=(k_1-k_0)^2\rho_1 \rho_0 \nonumber
    \\&=-(k_0-k_0\rho_0-k_1\rho_1)(k_1-k_0\rho_0-k_1\rho_1) \nonumber
    \\ &=-(k_0-X_{tT})(k_1-X_{tT}).
\end{align}
Thus, recalling (\ref{eq:SDE}), we see that the SDE of $\{X_{tT}\}$ is
\begin{equation}
    \dd X_{tT}=r_t X_{tT} \d t-\frac{P_{tT}(k_0-X_{tT})(k_1-X_{tT})}{T-t} \d W_t,
\end{equation}
with the initial condition $X_{0T}=k_0p+k_1(1-p)$.
For $K\in(P_{tT}k_0,P_{tT}k_1)$, we are able to solve the equation $\Lambda(t,x)=K$ for $x$.
We have
\begin{align}
    \Lambda(t,x)&=P_{tT} \left(k_0\, \Q[X_T=k_0\,|\,\xi_{tT}=x]+k_1\, \Q[X_T=k_1\,|\,\xi_{tT}=x]\right) \nonumber
        \\ &=P_{tT} \left(k_1-(k_1-k_0)\, \Q[X_T=k_0\,|\,\xi_{tT}=x]\right),
\end{align}
so the solution to $\Lambda(t,x)=K$ is
\begin{equation}
    \xi_t^*=\frac{t}{2T}(k_0+k_1)-\frac{T-t}{k_1-k_0} \, \log\left[\frac{p}{1-p}\frac{K-P_{tT}k_0}{P_{tT}k_1-K} \right].
\end{equation}
The price of a call option on $X_{tT}$ is
\begin{equation}
    C_{st}=P_{st}\sum_{i=0}^1 (P_{tT}k-K)\, \Phi\left[\frac{M(k_i)-\xi^*_t}{\sqrt{V}} \right] \Q[X_T=k_i\,|\,\xi_{sT}].
\end{equation}

\section*{Acknowledgements}
The authors are grateful to seminar participants at ETH-Z\"urich, Switzerland, April 2008; at the Bachelier Finance Society Fifth World Congress, London, UK, June 2008; and at the Mathematics in Finance Conference, Kruger National Park, RSA, September 2008, where drafts of this paper were presented, for their comments.

Edward Hoyle acknowledges the support of an EPSRC Doctoral Training Grant and a European Science Foundation research visit grant under the Advanced Mathematical Methods in Finance programme (AMaMeF).

This work was undertaken in part while Edward Hoyle and Lane P.~Hughston were members of the Department of Mathematics, King's College London, and Andrea Macrina was a member of the Department of Mathematics, ETH Z\"urich.
Lane P.~Hughtson acknowledges the support of the Aspen Center for Physics where a portion of this work was carried out.

\nocite{*}
\bibliography{HHM1}

\end{document}